\theoremstyle{plain}
\newtheorem{theorem}{Theorem}[section]
\newtheorem{lemma}[theorem]{Lemma}
\newtheorem{proposition}[theorem]{Proposition}
\theoremstyle{definition}
\newtheorem{definition}[theorem]{Definition}
\DeclareMathOperator*{\argmax}{arg\,max}
\title{Multi-District School Choice: Playing on Several Fields\thanks{This work adapts and extends the undergraduate thesis of \citet{Yin2022}. We thank Scott Kominers and Assaf Romm for insightful comments and discussions. Gonczarowski gratefully acknowledges research support by the National Science Foundation (NSF-BSF grant No.\ 2343922), Harvard FAS Inequality in America Initiative, and Harvard FAS Dean’s Competitive Fund for Promising Scholarship. Zhang was supported by an NSF Graduate Research Fellowship.}}
\author{
 Yannai A. Gonczarowski\thanks{Department of Economics and Department of Computer Science, Harvard University.\\
 Email: yannai@gonch.name.}
 \and 
 Michael Yin\thanks{Paris School of Economics. Email: michaelyin2018@gmail.com.}
 \and
 Shirley Zhang\thanks{Department of Computer Science, Harvard University. Email: szhang2@g.harvard.edu.}
}
\begin{document}

\begin{titlepage}

\maketitle

\begin{abstract}
We extend the seminal model of \citet{pathak2008leveling} to a setting with multiple school districts, each running its own separate centralized match, and focus on the case of two districts. In our setting, in addition to each student being either sincere or sophisticated, she is also either \emph{constrained}---able to apply only to schools within her own district of residence---or \emph{unconstrained}---able to choose any single district within which to apply. We show that several key results from \citet{pathak2008leveling} no longer hold: A sophisticated student may prefer for a sincere student to become sophisticated, and a sophisticated student may prefer for her own district to use Deferred Acceptance over the Boston Mechanism, irrespective of the mechanism used by the other district. We furthermore investigate the preferences of students over the constraint levels of other students. Many of these phenomena appear abundantly in large random markets. 
\end{abstract}

\end{titlepage}

\section{Introduction}

The Boston Mechanism (henceforth, BM; also sometimes referred to as the ``Immediate Acceptance'' mechanism) is a widely used school-choice mechanism, especially in school-choice systems that were not (re)designed by economists or computer scientists. This mechanism first maximizes the number of applicants who get their first-choice school (breaking ties based on the priorities that students have at the different schools, i.e., based on the schools' ``preferences''); then, subject to that, maximizes the number of applicants who get their second-choice school; then, subject to that, maximizes the number of applicants who get their third-choice school; and so forth. Despite being a very natural mechanism, BM suffers from various unattractive qualities, such as not being strategyproof and resulting in unstable matchings. Due to these and other shortcomings, there has been a push since the turn of the millennium \citep[e.g.,][]{ abdulkadirouglu2005new,abdulkadirouglu2005boston,pathak2008leveling} to replace BM with the better-behaved Deferred Acceptance mechanism (\citealp{gale1962college}; henceforth, DA) in school-choice systems.

One of the most compelling arguments given in favor of replacing BM with DA is the equity argument that originates in the seminal paper of \citet{pathak2008leveling}, which considers a setting with some students being sincere (i.e., uninformed and unstrategic, always reporting their true preferences) and some being sophisticated (i.e., informed and strategic, together playing a Nash equilibrium). That paper proves that sophisticated students weakly prefer BM over DA (the latter mechanism could be seen as a baseline that treats sincere and sophisticated students equally, due to its strategyproofness). This leads \citet{pathak2008leveling} to view BM as weakly (and many times strictly) conferring an advantage to sophisticated students over sincere ones. \citet{pathak2008leveling} furthermore prove that when BM is used, sophisticated students weakly prefer for sincere students to remain sincere, giving a plausible explanation as to why informed parent groups might not be likely to share their know-how with parents outside their groups or social circles.

A school district running a centralized matching mechanism is not an isolated capsule. Many districts, each running an independent centralized match, might exist next to each other, and some students might be able to choose in which district's match to participate. For example, a 2005 report for the Berkeley Unified School District in California
\citep{fried2005attending} 
estimated that between 7.8\% and 12\% of the district’s high schoolers were ``attending [the district] unofficially,'' and actually lived out-of-district. Choosing one's school district can thus be done without official permission (as in the case above) at personal risk,\footnote{See, for instance, \citet{martin2011mother} 
for more context on this illegal phenomenon, known as ``boundary hopping'' or ``residency fraud,'' which has at times led to prison sentences for parents.} or legally by moving to that district, an option many times available only to populations with greater financial resources.\footnote{Comparing the prices of houses located near school district boundaries, \citet{black1999better} 
estimates that parents are willing to pay 2.1\% more to enroll their child
in a district with a 5\% higher mean test score, and \citet{bayer2007unified} 
similarly estimate a 1.8\%
higher willingness to pay for homes in a district with an average test score that is higher by one standard deviation.}

Neighboring school districts are often independent of one another and might use different mechanisms to match students to schools. Due to the ability of some students to choose their school districts, one district switching its mechanism has the potential to change the multi-district equilibrium, changing students' strategies not only in terms of how they rank schools within a district but also in terms of their choice of school district. In this paper, we examine the two predictions of \citet{pathak2008leveling} that we describe above in a multi-district setting in which district choice (by the students who, for instance, possess the resources to officially relocate or are willing to risk punishment) is endogenized as part of the equilibrium. That is, in our setting, in addition to each student being either sincere or sophisticated, she is also either \emph{constrained}---able to apply only to schools within her own district of residence---or \emph{unconstrained}---able to strategically choose any single district within which to apply.\footnote{In either case, if a student is sophisticated, she strategically orders her submitted preference list over the schools in the district to which she applies.}

We prove that even when considering only two school districts, both of the predictions of \citet{pathak2008leveling} that we describe above no longer hold. Specifically, a sophisticated student may strictly prefer for her district to use DA over BM, irrespective of whether she is constrained or unconstrained and of the mechanism used by the other district. Furthermore, a sophisticated student may strictly prefer for some sincere student to become sophisticated. The latter phenomenon also appears abundantly in large random markets; that is, for sufficiently large random markets, a constant fraction of sophisticated students strictly prefer that at least some sincere students become sophisticated. We round out our investigation by asking, for completeness, whether some students might prefer for others to change their constraint types, e.g., whether an unconstrained student might prefer for another student to become unconstrained, or whether a constrained student might prefer for another student who resides in a different district to become constrained. We prove a strong ``anything goes'' result showing that every possible such combination appears abundantly in large random markets.

Our results are not without limitations. For one, consider the phenomenon of sophisticated students strictly preferring DA over BM. While we show this phenomenon to be possible, it is our only result that might be rare in random markets,\footnote{We do show that its frequency at least does not diminish as the market grows.} which could still lend credence to an argument in favor of DA over BM. Importantly, though, this argument becomes a quantitative issue of relative frequency rather than a qualitative issue of existence. More broadly, our paper is not intended to advocate for the use of BM. Rather, first and foremost, it serves to introduce a formal model of multi-district school choice and provide a proof-of-concept highlighting that taking into account the broader landscape beyond only a single district may qualitatively change the analysis, including the validity of certain arguments for or against the use of various mechanisms. When designing the specifics of a mechanism or market (be it when choosing the overall mechanism as discussed in this paper, or possibly even when considering far more minute implementation details), one always weighs the specifics of the market in question. Our results highlight that in some cases, this market should be even more broadly defined than is customary.

The remainder of this paper is structured as follows. After reviewing related work, in \cref{sec:model} we present the multi-district school choice model. In \cref{sec:sophistication_types} we prove our first main result, regarding preference over sophistication types. In \cref{sec:mechanism_choice} we prove our second main result, regarding preference over mechanisms. In \cref{sec:constraint_types} we prove our results regarding preference over constraint types. 
In \cref{sec:sincere-unconstrained}, we dive deeper into the mechanics that enable our first main result---that sophisticated students might strictly prefer for some or all sincere students to become sophisticated. We uncover that there are two distinct mechanisms that can drive this result, identify the precise features of our model that enable each of these mechanisms, show their robustness, and derive necessary as well as sufficient conditions over the market structure for each of these mechanisms to manifest. We conclude with a discussion in \cref{sec:discussion}.

\subsection{Related Work}

The application of mechanism design to school choice originated in \citet{abdulkadirouglu2003school}. Strategic opportunities in BM had been observed when this mechanism was first described in the economic literature \citep{abdulkadirouglu2005boston}, and were subsequently shown in the lab \citep{chen2006school} and in the field \citep{calsamiglia2018priorities}. Welfare arguments in favor of DA over BM have appeared in \citet{ergin2006games} and \citet{kojima2008games}, culminating in the equity and fairness arguments of \citet{pathak2008leveling}. Several papers examine some of the predictions of \citet{pathak2008leveling} in various extended models (still within a single district), such as with coarse priority structures \citep{abdulkadirouglu2011resolving,babaioff2019playing} or with a finer classification of sophistication types \citep{Zhang2021}. Our large-market analysis methods are technically most closely related to those of \citet{babaioff2019playing}.

To our knowledge, ours is the first theoretical analysis of multi-district school choice where some students can choose their district. \citet{Hafalir2022} also discuss a notion of interdistrict school choice, but in their model all students can freely rank schools regardless of district, and the focus is on a policy goal of diversity across districts. \citet{Grigoryan2023} considers multiple neighborhoods with a school in each one, but families can again freely rank schools regardless of neighborhood, and the emphasis is on aggregate welfare and welfare for low-income families under different matching mechanisms. In contrast to these papers, because we utilize the lens of constrained and unconstrained students, our multi-district school choice problems are distinct from large single-district problems, allowing us to re-examine the equity arguments of \citet{pathak2008leveling} in this setting. 

Within a single district, closest to our work are previous papers that analyze different types of schools (such as charter, magnet, and private schools) coexisting with public schools. For instance, considering schools that are not district-run and can therefore choose to use their own admissions systems, \citet{ekmekci2019common} 
analyze school incentives for participating in a unified enrollment system in a single district. \citet{Dogan2019}, meanwhile, compare the efficiency of unified and divided enrollment systems in a district that has selective and nonselective schools (where students can be admitted to a school of each type under the divided enrollment system). Other papers assume that there is no option for a unified enrollment system, and instead analyze a ``slightly decentralized'' mechanism that can be used to rematch students with the vacant seats that arise from schools of several types accepting the same student
\citep{manjunath2016two,turhan2019welfare,afacan2022parallel}. \citet{akbarpour2022centralized} 
study how exogenously varying the value of outside options affects behavior and preference over mechanisms. Although these threads of research bear some resemblance to the multi-district school choice problem that we study, there are a number of key differences. \citet{ekmekci2019common} fix the students in the district and focus on the choice by schools of whether to participate in a unified enrollment system; we instead fix the schools in each of multiple districts and allow
some students to choose in which district to apply. Meanwhile, \citet{manjunath2016two}, \citet{turhan2019welfare}, and \citet{afacan2022parallel} allow every student to report rankings for each school type and choose between matches they receive; the same is true for Doğan and Yenmez (2019) under divided enrollment. \citet{akbarpour2022centralized} similarly do not require students to forego their outside option in order to participate in a match. By contrast, we distinguish between students who can and cannot utilize district choice, and even those who can are only able to rank schools in their single chosen district (and therefore only receive one match, and cannot keep a guaranteed outside option). Our paper centers on how students endogenously choose their district of enrollment and thus affect the landscape of a school choice problem, a modeling decision that distinguishes this paper from the above prior work.

Finally, our investigation into the interplay between the choice of mechanism for one district and the multi-district equilibrium can be seen as contributing to a recent line of work on ``partial mechanism design'' (e.g., \citealp{PhilipponS2012,Tirole2012,Kang2023}; see \citealp{KangM2023tutorial}, for a review).
\section{Model}\label{sec:model}

\subsection{Standard Concepts}

Employing much of the notation of \citet{pathak2008leveling}, we use the following standard concepts from the school choice literature. 

\subsubsection{Single-District School Choice}

In a (single-district) \emph{school choice problem}, there is a set of \emph{students} $I = \{i_1, ..., i_n\}$ and a set of \emph{schools} $S = \{s_1, ..., s_m\}$. Each student $i$ has a strict \emph{preference ordering} $P_i$ over some subset of $S$, and $i$ prefers remaining unassigned over being assigned to schools that are not in this subset. Each school $s$ has a \emph{capacity} of $q_s$ seats, which is the maximum number of students that $s$ can accept, and a strict \emph{priority ordering} $\pi_s$ over all students. The schools' priorities for students are \emph{responsive} in the sense that:
\begin{itemize}
    \item a school cannot reject students if it is not at capacity, and
    \item a school cannot accept a lower priority student over a higher priority student, regardless of which other students may or may not be accepted.
\end{itemize}
School priority orderings and capacities are public (e.g., set by policy). So that school assignments can be determined, each student submits a \emph{rank-order list} (henceforth, \emph{ROL}) of any number of schools, which may or may not match her actual preference ordering.

\subsubsection{Mechanisms}

A school choice mechanism uses students' submitted ROLs and schools' priority orderings and capacities to determine school assignments in a single district. Two such mechanisms are the \emph{Boston Mechanism} (abbreviated as \emph{BM}) and Deferred Acceptance (abbreviated as~\emph{DA}).

\begin{definition}
The Boston Mechanism (BM) \citep{abdulkadirouglu2005boston} operates in several rounds as follows: 
\begin{itemize}
\item Round 1: Each student who submitted a non-empty ROL applies to the school she ranked 1st on her ROL. For each school, if there are at least as many seats available as applicants, the school (permanently) accepts every applicant. Otherwise, each school (permanently) allocates seats to applicants based on the school’s priority ordering up to its capacity, and rejects the remaining students for whom no seats remain.
\item Round $k > 1$: Consider only students who have not yet been accepted to a school (i.e., the students who have been rejected by the schools $1$st through $(k\!-\!1)$th on their ROLs). Each student who submitted an ROL of at least length $k$ applies to the school she ranked $k$th. For each school, applicants are accepted or rejected in the same way as in Round 1, where the seats available are those that were not already filled in previous rounds. If a school has no seats available at the beginning of the round, it rejects all new applicants.
\item This process terminates when every student has been either assigned a seat at a school or rejected by every school on her ROL, in which case she remains unassigned.
\end{itemize}
\end{definition}

Observe that a student is immediately permanently accepted or rejected when she applies to a school in BM. For this reason, BM is also known as the Immediate Acceptance mechanism.

\begin{definition}
The Deferred Acceptance mechanism \citep{gale1962college} also operates in several rounds, but with only tentative acceptances until the very end, as follows: 
\begin{itemize}
\item Round 1: Each student who submitted a non-empty ROL applies to the school she ranked 1st on her ROL. For each school, if there are at least as many seats available as applicants, the school tentatively accepts every applicant. Otherwise, each school tentatively allocates seats to applicants based on the school’s priority ordering up to its capacity, and (permanently) rejects the remaining students for whom no seats remain.
\item Round $k > 1$: Consider only students who are not currently tentatively accepted at a school (i.e., the students who were rejected by a school in round $k\!-\!1$). Each of these students applies to the school highest on her ROL that has not already rejected her. For each school, new applicants are considered alongside tentatively accepted students. All of these students are compared based on the school's priority ordering and are tentatively accepted or permanently rejected in the same way as in Round 1, where all seats at the school are initially considered available.\footnote{Students who were tentatively accepted by the school in round $k\!-\!1$ are not conferred any advantage, and may still be permanently rejected by that school in round $k$.}
\item This process terminates when every student has been either assigned a tentative seat at a school or rejected by every school on her ROL, in which case she remains unassigned. At this point, all tentative acceptances become permanent.
\end{itemize}
\end{definition}

We say that a mechanism is \emph{strategyproof} if truthful reporting is a dominant strategy for every student. BM is not strategyproof: a student may benefit from reporting an ROL that differs from her true preference ordering. DA is strategyproof \citep{dubins1981machiavelli,roth1982economics}: regardless of other students' reported ROLs, it is a dominant strategy for every student to report her true preference ordering as her ROL.

\subsection{Multi-District School Choice}

In this paper, we extend the traditional school choice problem (henceforth, the \emph{single-district school choice problem}) by considering multiple districts. Specifically, in a \emph{multi-district school choice problem}, there is a set of $\ell$ \emph{districts}. Each student $i$ resides in some district $d(i)$ and each school $s$ is located in some district $d(s)$. A student's preference ordering may be over schools in multiple districts (including districts in which the student does not reside), and a school's priority ordering is over all students across all districts. A student's ROL may only contain schools from a single district, however; we call this the district in which she \emph{enrolls}. Intuitively, this models the real-world setting wherein a student can only enroll in a single school district in a given year.

Each district uses its own school choice mechanism to determine assignments of the students who enroll in the district to the schools that are located in the district. Different districts may use the same mechanism or different mechanisms; regardless, the school assignments for each district are combined to form the school assignments for the multi-district school choice problem as a whole.

\subsection{Student Sophistication Types and Constraint Types}

As in \citet{pathak2008leveling}, a student is either \emph{sincere} or \emph{sophisticated}; this is known as her \emph{sophistication type}. Once a sincere student $i$ determines that she will enroll in some district~$j$, she submits her preference ordering restricted to schools in $j$ (i.e., $P_i$ with any schools not in $j$ removed) as her ROL. In other words, a sincere student reports her true preferences over schools in the district she enrolls in. On the other hand, a sophisticated student strategizes when submitting her ROL over schools in the district she enrolls in. In addition to having a sophistication type, in the multi-district setting a student is also either \emph{constrained} or \emph{unconstrained}; we refer to this as her \emph{constraint type}. A constrained student~$i$ can only enroll in the district in which she resides, $d(i)$; while an unconstrained student can enroll in any (single) district. 

Combining these two attributes, we have four categories of students: \emph{sincere-constrained}, \emph{sincere-unconstrained}, \emph{sophisticated-constrained}, and \emph{sophisticated-unconstrained}. The behavior of each of these is largely intuitive, with a sincere-constrained student reporting her true preferences over schools in her district of residence; a sophisticated-constrained student strategically choosing an ROL over schools in her district of residence; and a sophisticated-unconstrained student strategically choosing both a district to enroll in and an ROL to submit over schools in that district. As in \citet{pathak2008leveling}, sophisticated students who enroll in a DA district always use their dominant strategy within that district, i.e., truthfully rank the schools in that district. If an unconstrained student is indifferent between districts to enroll in (since each student has a strict preference ordering over schools she finds acceptable, such indifference can only occur if this student remains unassigned regardless of where she enrolls), then we assume that this student enrolls in her district of residence.

For the majority of this paper, we consider a sincere-unconstrained student to be one who reports her true preferences over schools in whichever district she enrolls in, and strategically chooses in which district to enroll accordingly. This is not the only reasonable definition of sincere-unconstrained students, and in \cref{sec:heuristics} we ensure that our results hold for a large class of other reasonable definitions as well; see \cref{sec:sincere-unconstrained} for a more in-depth discussion of sincere-unconstrained students.

\subsection{Uniform \texorpdfstring{$(n; k)$}{(n; k)} model}

Throughout this paper, we use examples of specific multi-district school choice problems to demonstrate particular phenomena, some of which stand in contrast to the propositions in \citet{pathak2008leveling} that hold for the single-district setting. To analyze how frequently such phenomena occur, we consider large random two-district school choice problems inspired by the uniform models of \citet{babaioff2019playing}. In the \emph{uniform $(n; k)$ model}, there are $2$ districts labelled $L$ and $R$.\footnote{Later, we denote schools in $L$ as $\ell_1, \ell_2, ...$ and schools in $R$ as $r_1, r_2, ...$ for ease of reading.} Collectively, $L$ and $R$ contain $n$ students $I=\{i_1, ..., i_n\}$ and $n$ schools $S=\{s_1, s_2, ..., s_n\}$, each with unit capacity (i.e., $q_s = 1$ for all $s\in S$).

Each student is either sincere or sophisticated; is either constrained or unconstrained; and resides in either district $L$ or district $R$. There are thus eight categories of students: one for each possible \textit{sophistication type -- constraint type -- district of residence} combination. A student's category is drawn independently of all other students' categories, and there is a positive probability of a student's category being any of the eight possibilities. As such, there exists some \emph{category-probability lower bound} $p > 0$, such that for each category, the probability of an arbitrary student being in this category is at least $p$. Each student's preference ordering over schools (which may include schools in any district) is drawn uniformly at random from among all possible (strict) preference orderings of length $k$.\footnote{This is a special case of the procedure used to draw preference lists in \citet{immorlica2005marriage}.}
Each student's preference ordering is independent of all other students' preference orderings, and of all students' categories.

Each school independently has probability $\nicefrac{1}{2}$ of being located in district $L$ and $\nicefrac{1}{2}$ of being located in district $R$.\footnote{These probabilities need only be constant and nonzero (and sum to $1$) for our results to hold, but we set them equal to avoid clutter.}
Finally, each school has a complete (strict) priority ordering over all students, drawn uniformly at random from the set of all such possible orderings, and independently of everything else. Thus, for any school and any two students $i_a$ and $i_b$, the probability that $i_a$ has priority over $i_b$ at that school is $\nicefrac{1}{2}$.
\section{Sophistication Types}\label{sec:sophistication_types}

In this section, we show that it is possible for a sophisticated student to prefer that (i.e., strictly benefit if) some sincere student becomes sophisticated. In fact, we prove that such students are abundant in large random markets in which at least one district uses BM. This result stands in contrast to Proposition~3 of \citet{pathak2008leveling}, which is that in the single-district setting when BM is used, all sophisticated students weakly suffer if any sincere student becomes sophisticated. Finally, we prove that this phenomenon cannot occur when all districts use~DA.

\subsection{Example: A sophisticated student may prefer for a sincere student to become sophisticated}\label{sophistication-example}

We first provide an illustrative example. Suppose that there are two districts, $L$ and $R$, with schools $\ell_1 \in L$ and $r_1, r_2, r_3, r_4 \in R$ respectively, where each school has unit capacity. District $L$ uses an arbitrary mechanism, while district $R$ uses BM. Further suppose that there are four students, $i_1$, $i_2$, $i_3$, and~$i_4$.

The students' sophistication types and preference orderings, as well as the schools' priority orderings, are shown in the table below.\footnote{The notation $s_a \succ s_b$ indicates a student preference for school $s_a$ over school $s_b$. The notation $i_a - i_b$ indicates a school priority for student $i_a$ over student $i_b$. Technically, a school's priority ordering must include all students. Here, for each school, we list only the priority ordering over students who find the school acceptable, as no other student would ever apply to the school. Other students could be placed anywhere in each school's priority ordering without affecting our results.} Students whose preference orderings contain schools in only one district reside in that district and have arbitrary constraint types. Students whose preference orderings contains schools in both districts are unconstrained and reside in an arbitrary district.

\noindent\begin{minipage}[t]{.5\linewidth}
\vspace{0.5pt}
\hfil Student preferences \hfil
\hrule
\begin{align*}
    \text{(sincere) } i_1: & r_2 \succ r_1 \succ \ell_1 \\
    \text{(sophisticated) } i_2: & \ell_1 \succ r_3 \\
    \text{(sincere) } i_3: & r_2 \\
    \text{(sincere) } i_4: & r_1 \succ r_4
\end{align*}
\end{minipage}%
\begin{minipage}[t]{.5\linewidth}
\vspace{0.5pt}
\hfil School priority orderings \hfil
\hrule
\begin{align*}
    \ell_1: & i_1 - i_2 \\
    r_1: & i_1 - i_4 \\
    r_2: & i_3 - i_1 \\
    r_3: & i_2 \\
    r_4: & i_4
\end{align*}
\vspace{0.5pt}
\end{minipage}

We show that $i_2$ strictly prefers for $i_1$ to become sophisticated. First, consider the original setting where $i_1$ is sincere-unconstrained, and recall that $i_1$ strategizes over districts but always reports a truthful ROL. Observe that $i_1$ enrolls in district $L$, as doing so matches her with school $\ell_1$ while enrolling in district $R$ results in her being unassigned. This is because if $i_1$ were to enroll in $R$, she would not be matched in the first round of BM, during which both $r_1$ and $r_2$ would be filled.

Since $i_1$ enrolls in district $L$, sophisticated-unconstrained student $i_2$ enrolls in district $R$ (and ranks only $r_3$) to avoid being unassigned. The matching process results in $i_1$ assigned to~$\ell_1$, $i_2$ assigned to~$r_3$, $i_3$ assigned to~$r_2$, and $i_4$ assigned to~$r_1$. This is the unique Nash equilibrium outcome. Note that in this outcome, $i_2$ is assigned to her second-choice school.

Suppose instead that $i_1$ becomes sophisticated. Student $i_1$ has no chance of being assigned to her first-choice school, and is guaranteed admittance to $r_1$ if and only if she enrolls in district $R$ and ranks $r_1$ first, so she does so. Student $i_2$ therefore chooses to enroll in district $L$ (and ranks only~$\ell_1$), as this (and only this) guarantees her admittance at $\ell_1$. The matching process results in $i_1$ assigned to~$r_1$, $i_2$ assigned to~$\ell_1$, $i_3$ assigned to~$r_2$, and $i_4$ is assigned to~$r_4$. This is the unique Nash equilibrium outcome. In this outcome, $i_2$ is assigned to her first-choice school, which is a strict improvement for her compared to when $i_1$ is sincere.

\subsection{Large-Market Analysis}

We generalize the example from Section~\ref{sophistication-example} to the uniform $(n; 3)$ model. (The same analysis also works in the uniform $(n; k)$ model for any constant $k\ge3$.)
We say that a sophisticated student $i_a$ strictly (weakly) prefers for a sincere student $i_b$ to become sophisticated if $i_a$ strictly (weakly) prefers her match in every Nash equilibrium of the multi-district school choice problem when $i_b$ is sophisticated to her match in every Nash equilibrium of the multi-district school choice problem when $i_b$ is sincere.
We show that there can be many sophisticated students who strictly prefer for distinct sincere students to become sophisticated.

\begin{theorem}\label{thm:sophisticated}
    For every $p \in (0, 1)$, there exists $\tau > 0$ such that for any large enough $n$, in the uniform $(n; 3)$ model with category-probability lower bound $p$ and with one district using BM and the other using an arbitrary mechanism, there exists a set of sophisticated students of expected size at least $\tau n$ wherein each sophisticated student strictly prefers for a distinct sincere student to become sophisticated, and weakly prefers for all other sincere students to become sophisticated.
\end{theorem}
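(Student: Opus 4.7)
The plan is to adapt the 5-student, 6-school configuration of Section~\ref{sophistication-example} into a \emph{gadget} and show that a random instance of the uniform $(n;4)$ model contains $\Omega(n)$ pairwise-disjoint copies in expectation. A gadget is an ordered tuple of 5 students and 6 schools such that: (i)~the students have the sophistication/constraint types of $i_1,\ldots,i_5$ (with $i_1,i_4,i_5$ fixed to be sincere for concreteness); (ii)~the 6 schools lie in the specified districts and, restricted to the gadget students, have the specified priority orderings; and (iii)~each gadget student's length-$4$ preference list begins with her example prefix---so $i_2$'s list equals the four-school gadget sequence exactly, while $i_1,i_3,i_4,i_5$ have arbitrary ``filler'' schools completing their length-$4$ lists. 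For any fixed ordered configuration, the categorical, residency, district, and within-gadget priority events are each of constant probability, while the prefix events contribute $\Theta(1/n)\cdot\Theta(1/n^{4})\cdot\Theta(1/n^{2})\cdot\Theta(1/n)\cdot\Theta(1/n^{2})=\Theta(1/n^{10})$ (since a random length-$4$ list begins with a specified length-$j$ prefix with probability $\Theta(1/n^j)$). Together with the $\Theta(n^{11})$ ordered 5-student, 6-school configurations, this yields $\Theta(n)$ expected gadgets, and a standard greedy argument (using that each student or school lies in $O(n^{10})$ configurations) extracts $\Omega(n)$ expected pairwise-disjoint gadgets.

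Next, conditional on a valid gadget, I argue that with $\Omega(1)$ probability the equilibrium behavior inside the gadget matches the example, so that $i_3$ is matched to $r_3$ when $i_2$ is sincere and to $\ell_2$ when $i_2$ becomes sophisticated---a strict improvement. This requires two ``insulation'' facts: (a)~no outside student displaces a gadget student at a gadget school, and (b)~the filler schools in the gadget students' preference lists do not alter anyone's equilibrium strategy. For (a), each outside student has a given gadget school in her length-$4$ list with probability $O(1/n)$ and, conditionally, has higher priority than the relevant gadget student with probability $\le 1/2$; these events are approximately independent across outside students, and a Poisson-style bound gives an $\Omega(1)$ probability of zero displacement across all six gadget schools. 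For (b), I argue that with $\Omega(1)$ probability, the filler schools of $i_3$ (and $i_5$) land in ``hopeless'' positions at which she would lose in any plausible strategy, so her best response remains the one from the example. The same insulation argument simultaneously shows that $i_3$ weakly prefers for every sincere student outside her gadget (whether in another gadget or in no gadget at all) to become sophisticated, since no such change can propagate into her gadget.

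The main obstacle is step (b): the filler schools in $i_3$'s preference list could in principle let her profitably deviate, for instance by enrolling in a different district in order to apply to a filler school, instead of playing the gadget strategy. Ruling this out with $\Omega(1)$ probability requires carefully conditioning on the realized categories, preferences, and priorities of the small set of outside students who happen to list $i_3$'s (or $i_5$'s) filler schools---an analysis paralleling the large-market techniques of \citet{babaioff2019playing}. Once all three steps combine, each disjoint working gadget contributes one distinct sophisticated student ($i_3$) satisfying the theorem's requirements, so the expected size of the desired set is $\Omega(n)$, yielding the claimed $\tau>0$.
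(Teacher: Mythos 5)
Your overall strategy---count expected pairwise-disjoint copies of the Section~\ref{sophistication-example} gadget via linearity of expectation, with constant-probability ``insulation'' events and the $\Theta(n^{11})\cdot\Theta(n^{-10})$ accounting---is exactly the paper's approach, and your expected-count arithmetic is right. However, the step you yourself flag as the main obstacle, your step~(b), is left genuinely unresolved, and your plan for it (conditioning on the outside students who happen to list $i_3$'s or $i_5$'s filler schools) is both uncarried-out and unnecessary. The paper closes this gap with a much simpler device: it strengthens your insulation event~(a) to require that \emph{no} student outside the quintuple finds \emph{any} of the six gadget schools acceptable (still a constant-probability event, converging to $e^{-24}$). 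Under that condition the filler schools are payoff-irrelevant for a deterministic reason: every gadget student is guaranteed, in every candidate equilibrium, a school ranked strictly above all of her fillers ($i_1$ and $i_4$ always receive their first choices; $i_2$'s list has no fillers; $i_3$ can always secure $r_3$, her second choice, since she is the only student who lists it; $i_5$ receives $r_1$ or $r_4$, her top two). Hence no gadget student ever has an incentive to deviate toward a filler school, and no outside student's behavior can propagate into the gadget. The same strengthening also makes your greedy disjointness extraction unnecessary---two satisfying quintuples cannot share a student by construction---whereas your weaker ``no displacement'' version of~(a) would force you to actually carry out both the conditioning argument and a second-moment/deletion computation, neither of which you supply.

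One further small gap: the theorem requires each $i_3$ to weakly prefer \emph{all} other sincere students to become sophisticated, and you argue this only for sincere students outside her gadget. You must also check $i_1$, $i_4$, and $i_5$ within the gadget. This is easy---$i_1$ and $i_4$ gain nothing from sophistication since they already receive their first choices, and $r_4$ is included in the construction precisely so that a sophisticated $i_5$ has no profitable deviation that affects the others---but it needs to be said.
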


\begin{proof}
    Without loss of generality, let $R$ be the district that uses BM and let $L$ be the other district. We start by lower bounding the expected number of ordered quartets of students $(i_1, i_2, i_3, i_4)$ that satisfy the following conditions (as in the example from Section~\ref{sophistication-example}): 

    \begin{enumerate}
        \item \textbf{Conditions on sophistication types, constraint types, and residence:} 
        \begin{enumerate}
            \item $i_1$ is sincere and unconstrained.
            \item $i_2$ is sophisticated and unconstrained.
            \item Both $i_3$ and $i_4$ are sincere and reside in $R$.
        \end{enumerate}
        \item \textbf{Conditions on student preferences and school locations:}\footnote{These have been reordered from the example in Section~\ref{sophistication-example} to match their order in the analysis below.} 
        \begin{enumerate}
            \item $i_4$ most prefers a school $r_1 \in R$ and second most prefers a school $r_4\in R$.
            \item $i_3$ most prefers a school $r_2 \in R$.
            \item $i_2$ most prefers a school $\ell_1 \in L$ and second most prefers a school $r_3 \in R$.
            \item $i_1$ has preference ordering $r_2 \succ r_1 \succ \ell_1$.
            \item No student finds a school in $\{\ell_1, r_1, r_2, r_3,r_4\}$ acceptable except as above.\footnote{That is: $i_4$ does not find $\ell_1$, $r_2$, or $r_3$ acceptable; $i_3$ does not find $\ell_1$, $r_1$, $r_3$, or $r_4$ acceptable; and so forth; and, no student other than $i_1$, $i_2$, $i_3$, and $i_4$ finds $\ell_1, r_1, r_2, r_3$, or $r_4$ acceptable.}
        \end{enumerate}
        \item \textbf{Conditions on school priorities:} 
        \begin{enumerate}
            \item $i_1$ has priority over $i_2$ at $\ell_1$.
            \item $i_1$ has priority over $i_4$ at $r_1$.
            \item $i_3$ has priority over $i_1$ at $r_2$.
        \end{enumerate}
    \end{enumerate}
    
    As in the example in Section~\ref{sophistication-example}, under these conditions, $i_2$ strictly benefits from $i_1$ becoming sophisticated.
    
    We lower bound the probability that this set of conditions occurs for a specific $(i_1, i_2, i_3, i_4)$. Each of the three sets of conditions is independent. The conditions on sophistication types, constraint types, and residence are satisfied with probability at least $p^4$. The conditions on student preferences and school locations are satisfied with probability at least
    \begin{multline*}
        \left(\frac{1}{2} \cdot \frac{1}{2}\right) \cdot \left(\frac{n-2}{n}\cdot\frac{1}{2}\right)\cdot \left(\frac{n-3}{n} \cdot \frac{1}{2} \cdot \frac{n-4}{n-1}\cdot \frac{1}{2} \right) \cdot \\* \cdot \left(\frac{1}{n(n-1)(n-2)}\right) \cdot \left(\frac{n-5}{n} \cdot \frac{n-6}{n-1} \cdot \frac{n-7}{n-2}\right)^{n}.
    \end{multline*}
    In the expression above, each of the five parenthetical expressions corresponds to one of the five conditions on student preferences and school locations, and represents the condition's probability (for the last expression: a lower bound on the condition's probability), conditioned on the previous conditions. 
    For sufficiently large $n$,\footnote{Note that $\lim_{n\rightarrow\infty}(1-\frac{r}{n-a})^{(n-b)}$, where $a$, $b$, and $r$ are constants, is $e^{-r}$.} this probability can be lower bounded by\footnote{The final $\nicefrac{1}{2}$ is to accommodate for all the multiplicands of the form $\frac{n-a}{n-b}$ or $(1-\frac{r}{n-a})^{(n-b)}$, which have a nonzero limit, not fully converging for finite large $n$.}
    \[
        \frac{1}{2^5} \cdot \frac{(n - 3)!}{n!} \cdot \frac{1}{e^{15}} \cdot \frac{1}{2}.
    \]
    Finally, the conditions on school priorities are satisfied with probability $\nicefrac{1}{2^3}$. 
    
    The number of ordered quartets of distinct students is $\tfrac{n!}{(n - 4)!}$. List all such quartets and let the $j$th element of the list be $I_j$. Let $\mathbbm{1}_{I_j}$ be the indicator random variable that represents whether the students in $I_j$ satisfy all of the above conditions. Then the expected number of quartets of students that satisfy all conditions is 
    \[
        \mathbb{E}\left[\sum_{j =1 }^{\tfrac{n!}{(n - 4)!}}\mathbbm{1}_{I_j}\right] \geq \frac{n!}{(n - 4)!} \cdot p^4 \cdot \frac{1}{2^3} \cdot \frac{1}{2^{6} \cdot e^{15}} \cdot \frac{(n - 3)!}{n!} = \frac{p^{4}(n - 3)}{2^{9} \cdot e^{15}}.
    \]
    We observe that by construction, it is impossible for there to be two different quartets $(i_1,i_2,i_3,i_4)$ and $(i'_1,i'_2,i'_3,i'_4)$ that satisfy the above conditions such that $i'_1=i_1$ or $i'_2=i_2$. Assume for contradiction that for some $a\in\{1,2\}$, it is the case that $i'_a=i_a$ for two such quartets. If $a=2$ then we note that $i'_1=i_1$ as well, since $i_1$ is the only non-$i_2$ student who finds acceptable the school most preferred by $i_2$, and $i'_1$ is the only non-$i'_2$ student that finds acceptable the school most preferred by $i'_2=i_2$. Therefore, regardless of the value of $a$, we have that $i'_1=i_1$. However, this implies that $(i_1,i_2,i_3,i_4)=(i'_1,i'_2,i'_3,i'_4)$ because the identity of $i_1$ similarly uniquely determines those of $i_2$, $i_3$, and $i_4$, while the identity of $i'_1=i_1$ uniquely determines those of $i'_2$, $i'_3$, and $i'_4$ in the same way. This is a contradiction.

    To conclude, all such quartets of students involve distinct ``$i_1$'' and ``$i_2$'' students, so there are at least an expected $\frac{p^{4}}{2^{9} \cdot e^{15}} \cdot (n - 3)$ sophisticated students who each strictly prefer for a distinct sincere student to become sophisticated. 
    Also by construction, each such sophisticated student is unaffected by sincere students outside of her quartet, and the sophistication types of students $i_3$ and $i_4$ in the quartet do not affect the dynamic between students $i_2$ and~$i_1$.\footnote{This is because $i_3$ will always be assigned to her first choice school, and $i_4$ will either be assigned to $r_1$ if $i_1$ is sincere or $r_4$ if $i_1$ is sophisticated.} Thus, these sophisticated students also weakly prefer for all other sincere students to become sophisticated.    
    We choose $\tau = \frac{p^{4}}{2^{10} \cdot e^{15}}$ to satisfy the theorem statement.\footnote{We made no attempt to optimize this value, as our goal is only to ascertain linearity in $n$ of the number of such students.}
\end{proof}

\subsection{Necessity of a BM District}

We conclude this section by proving that the condition in Theorem \ref{thm:sophisticated} of at least one district using BM is not an artifact of our proof, but rather dropping this condition invalidates the result.

\begin{proposition}
For every multi-district school choice problem in which all districts use DA, there does not exist a sophisticated student and sincere student pair such that the sophisticated student strictly prefers for the sincere student to become sophisticated.
\end{proposition}

\begin{proof}
Let there be a multi-district school choice problem with any number of districts, all of which use DA. Consider a sincere student $i$. We first show that $i$ enrolls in the same district whether she is sincere or sophisticated. This is straightforward if $i$ is constrained, as then she always enrolls in her district of residence $d(i)$.

Otherwise, suppose that $i$ is unconstrained. Let $d$ be the district that $i$ enrolls in when sincere. By the definition of sincere-unconstrained, $i$ weakly prefers enrolling in $d$ and truthfully ranking schools in $d$ to doing the same in any other district. 

Note that because every district uses DA, if $i$ becomes sophisticated and enrolls in some district $d'$, she still truthfully ranks the schools in $d'$, thus obtaining the same outcome as if she enrolled in $d'$ when sincere. Therefore, if $i$ becomes sophisticated, she again weakly prefers enrolling in $d$. If $i$ is assigned to any school when she enrolls in $d$ as a sincere student, then this preference is strict because $i$ has a strict preference ordering over schools. If $i$ is unassigned when she enrolls in $d$ as a sincere student, then she must be unassigned regardless of what district she enrolls in and what her sophistication level is, and $d=d(i)$. In either case, $i$ still enrolls in $d$ if she becomes sophisticated.

To conclude, $i$ behaves the same whether she is sincere or sophisticated: We showed that $i$ enrolls in the same district regardless of her sophistication level, and she reports her true preferences over schools in that district. As such, every student has the same outcome if $i$ becomes sophisticated. Since $i$ was arbitrary, it is therefore not possible for a sophisticated student to prefer for any sincere student to become sophisticated.
\end{proof}
\section{Mechanism Choice}\label{sec:mechanism_choice}

In this section, we show that another key result of \citet{pathak2008leveling} no longer holds true in the multi-district setting. Specifically, we show that in the multi-district setting, a sophisticated student may strictly prefer for her district to use DA instead of BM. This is true regardless of whether the sophisticated student is constrained or unconstrained. In particular, we give an example where the sophisticated student only finds schools in one district acceptable and prefers for that district to use DA.

\subsection{Example: A sophisticated student may strictly prefer DA}\label{soph_can_prefer_da-example}

Our example is as follows. Suppose there are two districts, $L$ and $R$, with schools $\ell_1, \ell_2 \in L$ and school $r_1 \in R$ respectively, where each school has unit capacity. Further suppose that there are three students, $i_1$, $i_2$, and $i_3$.

The students' sophistication types and preference orderings, as well as the schools' priority orderings, are shown in the table below.
Students $i_1$ and $i_2$ reside in $L$ and have arbitrary constraint types. Student $i_3$ is unconstrained and resides in an arbitrary district.

\noindent\begin{minipage}[t]{.5\linewidth}
\vspace{0.5pt}
\hfil Student preferences \hfil
\hrule
\begin{align*}
    \text{(sincere) } i_1: & \ell_1 \succ \ell_2 \\
    \text{(sophisticated) } i_2: & \ell_2 \succ \ell_1 \\
    \text{(sophisticated) } i_3: & \ell_2 \succ r_1
\end{align*}
\end{minipage}%
\begin{minipage}[t]{.5\linewidth}
\vspace{0.5pt}
\hfil School priority orderings \hfil
\hrule
\begin{align*}
    \ell_1: & i_2 - i_1 \\
    \ell_2: & i_1 - i_3 - i_2 \\
    r_1: & i_3
\end{align*}
\vspace{0.5pt}
\end{minipage}

Observe that because $i_3$ is the only student who finds any school in district $R$ acceptable, the mechanism used by district $R$ is irrelevant. 

We show that $i_2$ prefers for district $L$ to use DA rather than BM. First, assume that district~$L$ is using BM. As sincere student $i_1$ does not apply to $\ell_2$ in the first round, sophisticated student $i_3$ chooses to apply to $\ell_2$ in the first round, guaranteeing $i_3$'s acceptance at~$\ell_2$. Sophisticated student $i_2$ hence realizes that she has no chance at $\ell_2$ and instead applies to~$\ell_1$. This process results in $i_1$ unassigned, $i_2$ assigned to $\ell_1$, and $i_3$ assigned to $\ell_2$, which is the unique Nash equilibrium outcome. Note that in this Nash equilibrium outcome, $i_2$ is assigned to her second-choice school.

Suppose instead that district $L$ uses DA. Then,
student $i_2$ uses her dominant strategy of ranking $\ell_2$ above $\ell_1$. This induces student $i_3$ to enroll in $R$ instead of $L$, as enrolling in $L$ would result in $i_3$ being unassigned. Specifically, $i_3$ could ``knock out'' $i_2$ from~$\ell_2$, but in that case $i_2$ would knock out $i_1$ from $\ell_1$ and $i_1$ would in turn knock out $i_3$ from $\ell_2$. This process results in
$i_1$ assigned to $\ell_1$, $i_2$ assigned to $\ell_2$, and $i_3$ assigned to $r_1$, which is the unique Nash equilibrium outcome. In this Nash equilibrium outcome, $i_2$ is assigned to her first-choice school, which is a strict improvement for her compared to if district $L$ uses BM.

A key feature of this example is that there exists a cycle within the preferences of $i_1$ and~$i_2$. This causes $i_3$ to get knocked out of $\ell_2$ when $i_3$ applies to district $L$ and district $L$ is using DA. We show in Appendix~\ref{cycle-with-many-sophisticated} that this example can be generalized to include a cycle with additional sophisticated students, in which each sophisticated student within the cycle similarly strictly prefers for the district containing her entire preference list to use DA.

\subsection{Large-Market Analysis}

We generalize the example from Section~\ref{soph_can_prefer_da-example} to the uniform $(n; 2)$ model. (The same analysis also works in the uniform $(n; k)$ model for any constant $k\ge 2$.)
We say that a sophisticated student $i$ strictly prefers for a district $d$ to use DA if $i$ strictly prefers her match in every Nash equilibrium of the multi-district school choice problem when $d$ uses DA to her match in every Nash equilibrium of the multi-district school choice problem when $d$ uses BM.
We show that there can be a constant number of sophisticated students who each prefer for the district that contains her entire preference list to use DA.

\begin{theorem}\label{soph_can_prefer_da-theorem}
    For every $p \in (0, 1)$, there exists $\tau > 0$ such that for any large enough $n$, in the uniform $(n; 2)$ model with category-probability lower bound $p$, there exists a set of sophisticated students of expected size at least~$\tau$ wherein each sophisticated student strictly prefers for the district that contains her entire preference list to use DA rather than BM, regardless of the mechanism used by the other district.
\end{theorem}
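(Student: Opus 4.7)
The plan is to mirror the structure of the previous theorem's proof: generalize the illustrative triple $(i_1, i_2, i_3)$ from Section~\ref{soph_can_prefer_da-example} by lower-bounding the expected number of student-triples in the random $(n;2)$ market that satisfy a set of conditions guaranteeing that the sophisticated student $i_2$ strictly prefers DA over BM in the district containing her entire preference list. Because the example uses only three students and three schools (versus the five students and six schools of the previous theorem), and because each student in the uniform $(n;2)$ model ranks only two schools, the probability that a specific student-triple matches the example configuration is of order $1/n^3$; after summing over the $\sim n^3$ ordered student-triples and the $\sim n^3$ ordered school-triples, we obtain an expected count of order $1$, which explains why the theorem's lower bound is a constant $\tau$ rather than linear in $n$.

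First, I would enumerate the required conditions in three independent groups. (i)~Sophistication/constraint/residence: $i_1$ sincere residing in $L$; $i_2$ sophisticated residing in $L$ (arbitrary constraint type); $i_3$ sophisticated-unconstrained---each holds with probability at least $p$, so jointly with probability at least $p^3$. (ii)~Preferences and school locations: there exist distinct schools $\ell_1,\ell_2,r_1$ with $\ell_1,\ell_2 \in L$ and $r_1 \in R$ such that $i_1$'s preferences are $\ell_1 \succ \ell_2$, $i_2$'s are $\ell_2 \succ \ell_1$, $i_3$'s are $\ell_2 \succ r_1$, and no other student lists any of $\ell_1,\ell_2,r_1$ in her length-$2$ preference list. (iii)~Priorities: $i_2$ has priority over $i_1$ at $\ell_1$, and the top three at $\ell_2$ are $i_1$, then $i_3$, then $i_2$---jointly with probability $\tfrac{1}{2} \cdot \tfrac{1}{6} = \tfrac{1}{12}$, while priorities at $r_1$ need no conditioning because by~(ii) only $i_3$ ever applies there.

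Next, I would bound condition~(ii) by summing over ordered distinct school-triples $(\ell_1,\ell_2,r_1)$. Each of the three matching-preference events on $i_1,i_2,i_3$ holds (independently) with probability $1/(n(n-1))$; the school-location event $\{\ell_1,\ell_2 \in L,\; r_1 \in R\}$ contributes a factor of $1/8$; and the ``no other student touches any of these three schools'' event contributes $\bigl(\tfrac{(n-3)(n-4)}{n(n-1)}\bigr)^{n-3}$, which for sufficiently large $n$ is bounded below by a positive constant (around $e^{-6}$). Assembling these factors and collapsing the $\sim n^3$ ordered school-triples against the $n^3$ factors in the denominators yields a lower bound of order $1/n^3$ on the probability that a given student-triple satisfies~(ii). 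Summing over the $\sim n^3$ ordered student-triples and multiplying by $p^3$ and $\tfrac{1}{12}$ then gives a constant lower bound on the expected number of qualifying triples.

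The main obstacle, as in the previous theorem, is verifying a clean ``isolation'' argument that turns the expected count of qualifying triples into an expected count of \emph{distinct} sophisticated students $i_2$ with the desired property. Because the chosen schools $\ell_1,\ell_2,r_1$ are acceptable to precisely $i_1,i_2,i_3$, the equilibrium analysis of Section~\ref{soph_can_prefer_da-example}---wherein $i_2$ is matched to her $2$nd choice $\ell_1$ under BM but to her $1$st choice $\ell_2$ under DA, in the unique Nash equilibrium of the subgame restricted to these three students---is decoupled from the rest of the market and insensitive to the mechanism used by $R$ (since only $i_3$ ever applies in $R$, and only to $r_1$, where she is the sole applicant). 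Furthermore, $i_2$ uniquely determines her triple: her preferences pin down $\ell_1,\ell_2$, and the isolation condition then pins down $i_1$ (the only other student ranking $\ell_1$) and $i_3$ (the only other student ranking $\ell_2$); hence distinct qualifying triples yield distinct $i_2$'s. Setting $\tau$ to a suitable positive constant derived from the above then completes the proof.
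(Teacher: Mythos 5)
Your proposal is correct and follows essentially the same route as the paper: the same three-student configuration from Section~\ref{soph_can_prefer_da-example}, the same three independent groups of conditions with the same probability factors (at least $p^3$, order $1/n^3$ after the $e^{-6}$ isolation factor, and $\tfrac{1}{12}$ for priorities), summed over the $\sim n^3$ ordered triples to get a constant expected count of disjoint triples. The only cosmetic difference is that you sum over ordered school-triples where the paper conditions directly on the identities of the students' favorite schools; the arithmetic is equivalent.
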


\begin{proof}
    We start by lower bounding the expected numbered of ordered trios of students $(i_1, i_2, i_3)$ that satisfy the following conditions (as in the example from Section~\ref{soph_can_prefer_da-example}):
    \begin{enumerate}
        \item \textbf{Conditions on sophistication types, constraint types, and residence:} 
        \begin{enumerate}
            \item $i_1$ is sincere and resides in $L$.
            \item $i_2$ is sophisticated and resides in $L$.
            \item $i_3$ is sophisticated and unconstrained.
        \end{enumerate}
        \item \textbf{Conditions on student preferences and school locations:} 
        \begin{enumerate}
            \item $i_1$ most prefers a school $\ell_1 \in L$, and second most prefers a school $\ell_2 \in L$.
            \item $i_2$ has preference ordering $\ell_2 \succ \ell_1$.
            \item $i_3$ most prefers $\ell_2$ and second most prefers a school $r_1 \in R$.
            \item No student finds a school in $\{\ell_1, \ell_2, r_1\}$ acceptable except as above.
        \end{enumerate}
        \item \textbf{Conditions on school priorities:} 
        \begin{enumerate}
            \item $i_2$ has priority over $i_1$ at $\ell_1$.
            \item At $\ell_2, i_1$ has priority over $i_3$, who in turn has priority over $i_2$.
        \end{enumerate}    
    \end{enumerate}

    As in the example in Section~\ref{soph_can_prefer_da-example}, under these conditions, $i_2$ strictly benefits from $L$ using DA rather than BM.

    We lower bound the probability that this set of conditions occurs for a specific $(i_1, i_2, i_3)$. Each of the three sets of conditions is independent. The conditions on sophistication types, constraint types, and residence are satisfied with probability at least~$p^3$. The conditions on student preferences and school locations are satisfied with probability at least
    \[
        \left(\frac{1}{2}  \cdot \frac{1}{2} \right) \cdot \left(\frac{1}{n} \cdot \frac{1}{n-1}\right) \cdot \left(\frac{1}{n} \cdot \frac{n-2}{n-1} \cdot \frac{1}{2} \right)  \cdot \left(\frac{n-3}{n} \cdot \frac{n-4}{n-1}\right)^{n}.
    \]
    In the expression above, each of the four parenthetical expressions corresponds to one of the four conditions on student preferences and school locations, and represents a lower bound on the condition's probability, conditioned on the previous conditions. For sufficiently large~$n$, the above expression can be lower bounded by 
    \[
        \frac{1}{2^3} \cdot \frac{1}{n(n-1)(n-2)} \cdot \frac{1}{e^6}\cdot\frac{1}{2}.
    \]
    Finally, the conditions on school priorities are satisfied with probability $\tfrac{1}{2} \cdot \tfrac{1}{6} = \frac{1}{2^2 \cdot 3}$.
    
    The number of ordered trios of distinct students is $n(n-1)(n-2)$. List all such trios and let the $j$th element of the list be $I_j$. Let $\mathbbm{1}_{I_j}$ be the indicator random variable that represents whether the students in $I_j$ satisfy all of the above conditions. Then the expected number of trios of students that satisfy all conditions is
    \[
        \mathbb{E}\left[\sum_{j = 1}^{n(n-1)(n-2)} \mathbbm{1}_{I_j}\right] \geq n(n-1)(n-2) \cdot p^3 \cdot \frac{1}{2^2 \cdot 3} \cdot \frac{1}{2^4 \cdot e^6} \cdot \frac{1}{n(n-1)(n-2)} = \frac{p^3}{2^6 \cdot 3 \cdot e^{6}}.
    \]
    We observe that by construction, it is impossible for there to be two different trios $(i_1,i_2,i_3)$ and $(i'_1,i'_2,i'_3)$ that satisfy the above conditions such that $i'_2=i_2$. Indeed, this would imply that $(i_1,i_2,i_3)=(i'_1,i'_2,i'_3)$ because the identity of $i_2$ uniquely determines those of $i_1$ and $i_3$ (these are the only two other students who find acceptable the first choice of $i_2$, preferring it second and first respectively), while the identity of $i'_2=i_2$ uniquely determines those of $i'_1$ and $i'_3$ in the same way.

    To conclude, all such trios of students involve distinct ``$i_2$'' students, so there are at least an expected $\frac{p^3}{2^6 \cdot 3 \cdot e^{6}}$ sophisticated students who each strictly prefer for the district that contains their entire preference list to use DA over BM. We  choose $\tau = \frac{p^3}{2^6 \cdot 3 \cdot e^{6}}$ to satisfy the theorem statement.
\end{proof}
\section{Constraint Types}\label{sec:constraint_types}

In the previous two sections, we have shown that two predictions made by \citet{pathak2008leveling} no longer hold in multi-district settings. The first of these predictions revolved around whether a sophisticated student might prefer for another student to change their sophistication type in a particular way (specifically, becoming sophisticated). In our setting, students are not only characterized by their sophistication type but also by their constraint type, and hence it is natural to ask, for completeness, whether and when some students might prefer for some other students to change their constraint type.

We find a strong ``anything goes'' result here: For any combination of constraint types for two students, it might be the case that the first student strictly prefers for the constraint type of the second student to change, and this is furthermore abundant in large random markets. This holds regardless of the sophistication types of the two students, regardless of whether or not they reside in the same district, and regardless of the mechanisms used by the two districts. (We state this result in the uniform $(n; 2)$ model; however, similarly to our previous results, the same analysis also works in the uniform $(n; k)$ model for any constant $k\ge 2$.)
We say that a student $i_a$ strictly prefers for a student $i_b$ to change her constraint type if $i_a$ is strictly worse off in every Nash equilibrium of the multi-district school choice problem when $i_b$ has her given constraint type compared to every Nash equilibrium of the multi-district school choice problem when $i_b$ has the opposite constraint type.

\begin{theorem}\label{constraint}
    For every $p \in (0, 1)$, there exists $\tau > 0$ such that for every pair of (not necessarily distinct) sophistication types $s_1$ and $s_2$, every pair of (not necessarily distinct) constraint types $c_1$ and~$c_2$, and for any large enough $n$, in the uniform $(n; 2)$ model with category-probability lower bound $p$ and with the districts using any combination of matching mechanisms, there exists a set of expected size at least $\tau n$ of students of sophistication type $s_1$ and constraint type $c_1$ wherein each student strictly prefers for a distinct student of sophistication type $s_2$ and constraint type $c_2$ from the same district to change her constraint type. Furthermore, this also holds if ``from the same district'' is replaced with ``from the other district.''
\end{theorem}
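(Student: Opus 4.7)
The plan is to reuse the template of the preceding theorems in Sections 3 and 4. For each admissible case---that is, each choice of $s_1, s_2 \in \{\text{sincere}, \text{sophisticated}\}$, each choice of $c_1, c_2 \in \{\text{constrained}, \text{unconstrained}\}$, each pair of mechanisms assigned to $(L, R)$, and each choice of ``same'' versus ``different'' district for $i_a$ and $i_b$---I would exhibit a constant-sized gadget: a tuple of students with prescribed sophistication types, constraint types, residence districts, preference orderings, and school priorities, together with distinguished students $i_a$ of type $(s_1, c_1)$ and $i_b$ of type $(s_2, c_2)$ such that $i_a$'s match in every Nash equilibrium is strictly better when $i_b$'s constraint type is flipped. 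Linearity of expectation over ordered tuples of distinct students in the uniform $(n;2)$ model then yields, just as in the preceding theorems, $\tau n$ disjoint embeddings of the gadget in expectation, which produces the claimed set.

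The main obstacle is constructing all of the gadgets. Several cases follow from a direct competition argument. For example, if $i_a$ and $i_b$ reside in the same district $L$ and $c_2 = \text{constrained}$, it suffices to arrange $i_b$'s preference so that she strictly prefers some school in $R$ over any school she could obtain in $L$: removing her constraint frees her to leave $L$, which cascades a better seat to $i_a$. A symmetric construction handles the different-district case with $c_2 = \text{unconstrained}$, where imposing a constraint forces $i_b$ out of $i_a$'s district. The more delicate cases are the counterintuitive ones---same district with $c_2 = \text{unconstrained}$, and different district with $c_2 = \text{constrained}$---in which flipping $i_b$'s constraint type appears to add a competitor against $i_a$. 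Here the gadget must induce a rejection chain under DA, or a first-round reallocation under BM, whereby $i_b$'s forced entry into $i_a$'s district displaces a third student $i_c$, and it is $i_c$'s departure, not $i_b$'s arrival, that benefits $i_a$. The sophistication types $s_1, s_2$ are realized by either letting the relevant student submit a truthful ROL (sincere) or a strategic ROL consistent with equilibrium (sophisticated); $i_a$'s own constraint type $c_1$ is immaterial once $i_a$ is designed to enroll in a specific district regardless. Filler students may be added at $O(1)$ additional cost to absorb displaced agents at otherwise-unused schools.

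The counting step is identical in form to that in the previous proofs. For a gadget on $k = O(1)$ students and $O(1)$ designated schools, the probability that a fixed ordered $k$-tuple of distinct students realizes the gadget factors as: at least $p^k$ for the prescribed sophistication/constraint/residence labels; $2^{-O(1)}$ for the prescribed districts of the designated schools; $\Theta(1/n^{O(1)})$ for the prescribed positions in the designated students' preference lists; a factor bounded below (for all sufficiently large $n$) by $e^{-O(1)}\cdot\tfrac{1}{2}$ coming from the requirement that none of the remaining $n-k$ students finds any designated school acceptable; and $2^{-O(1)}$ for the prescribed priority comparisons. Multiplying by the $\Theta(n^k)$ ordered $k$-tuples of distinct students yields an expected count of at least $\tau n$ realizations. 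Disjointness of distinct realizations follows from the ``no outside student finds a designated school acceptable'' condition, exactly as in the preceding theorems. Taking $\tau$ to be the minimum of the resulting constants over the finitely many cases delivers a single $\tau > 0$ depending only on $p$ that works uniformly across the entire statement.
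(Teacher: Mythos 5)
Your overall architecture---a finite case analysis over $(s_1,s_2,c_1,c_2,\text{same/different district},\text{mechanisms})$, a constant-size gadget per case, and the standard counting/disjointness argument via ``no outside student finds a designated school acceptable''---is exactly the paper's, and your counting step is fine. You also correctly isolate which cases are easy (flip causes $i_b$ to vacate a seat that $i_a$ can reach directly) and which are delicate. But your proposed mechanism for the delicate cases is wrong as stated. You claim the gadget should work by having ``$i_b$'s forced entry into $i_a$'s district displace a third student $i_c$,'' with $i_c$'s departure benefiting $i_a$. This cannot work: whatever seat $i_b$ takes from $i_c$ is now occupied by $i_b$, so no seat in $i_a$'s district is freed, and adding an applicant to a district weakly hurts every incumbent there under both DA and BM. A rejection chain started by a new entrant only propagates harm downstream, never benefit.

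The paper's construction runs the chain in the opposite direction: the benefit always originates from the seat $i_b$ \emph{vacates in her old district}, never from her arrival. Concretely, in the delicate cases ($i_a$ constrained and the vacated school $s$ lying outside $i_a$'s district), one plants an \emph{unconstrained} third student $i_3$ who currently occupies $i_a$'s desired school $\ell_1$ but whose first choice is $s$; she is blocked from $s$ only by $i_b$'s higher priority there. When $i_b$'s constraint flips and she leaves $s$, $i_3$ is \emph{attracted away} to $s$, vacating $\ell_1$ for $i_a$. Meanwhile $i_b$'s own entry into $i_a$'s district is rendered harmless by pointing her at a school no one else finds acceptable. Relatedly, your remark that $c_1$ is ``immaterial'' is off: whether $i_a$ is constrained is precisely what determines whether she can chase the vacated seat herself or needs the intermediary $i_3$. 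With the gadget's causal direction corrected, the rest of your argument goes through as you describe.
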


As it turns out, two types of constructions suffice to cover all of the various combinations of constraint types, sophistication types, districts of residence, and mechanisms in \mbox{\cref{constraint}}. Let $i_1$ and $i_2$ be two students, where $i_1$ is the student who prefers for $i_2$ to change her constraint type. Both constructions involve $i_2$ vacating her seat at a school $s$---either because she becomes constrained and $s$ is not in her district of residence, or because she becomes unconstrained and would rather enroll in another district.

The simpler of the two constructions has $i_1$ filling the vacancy left by $i_2$ at $s$. This construction can be used in cases where $i_1$ is unconstrained, as well as in cases where $i_1$ is constrained and the school $s$ is in $i_1$'s district of residence. The second, slightly more elaborate construction covers the remaining cases, in which $i_1$ is constrained and the school~$s$ is outside $i_1$'s district of residence. These cases, in which our results are arguably more surprising at first glance, include situations where $i_2$ resides in the same district as (the constrained) $i_1$ and vacates a spot in another district when $i_2$ becomes constrained, as well as situations where $i_2$ resides in a different district than $i_1$ and vacates a spot in that district when $i_2$ becomes unconstrained. In such cases, we introduce a third, unconstrained student, $i_3$, who takes the seat vacated by $i_2$ and therefore vacates a seat in the district of $i_1$, which~$i_1$ in turn gets to fill.

In \cref{app:constraint}, we prove two cases of \cref{constraint}---one using each of the two constructions. The proof of each of the other cases of \cref{constraint} is completely analogous to the proof of one of these two cases.
\section{Sincere-Unconstrained Students}\label{sec:sincere-unconstrained}

In this section, we dive deeper into the machinery behind the results of \cref{sec:sophistication_types}, and ask which features of the school choice problems used in the analysis of that \lcnamecref{sec:sophistication_types} are necessary to give rise to the phenomenon of sophisticated students who prefer for sincere students to become sophisticated.

The definitions of three of the student types in our taxonomy follow closely from ideas in \citet{pathak2008leveling}: the two constrained types behave, within their district, just like the two types in \citet{pathak2008leveling}, while sophisticated-unconstrained students have complete strategic freedom, analogous to sophisticated students having complete strategic freedom within the framework of \citet{pathak2008leveling}. Our fourth type---sincere-unconstrained students---while still heavily inspired by \citet{pathak2008leveling} and completely in line with our taxonomy, is conceptually (and formally, as we discuss below) somewhat further from 
the types defined by that paper. Therefore, we first ask whether and to what extent this student type plays a key role in our analysis, and then ask whether and to what extent our results are robust to changes in the precise definition of such students.

\subsection{The Role of Sincere-Unconstrained Students in the Results of Section~\ref{sec:sophistication_types}}

\subsubsection{Example: When all sincere students are constrained}\label{sincere-unconstrained-example}

The construction in \cref{sec:sophistication_types} hinges on a trait that only a sincere-unconstrained student can possess: She can block a sophisticated student from getting a seat at some school, and yet, she may vacate that seat for another school if she becomes sophisticated.\footnote{While a sincere-constrained student can also block a sophisticated student in the same way, a sincere-constrained student will not vacate that seat if she becomes sophisticated, as it is the best outcome she can achieve.} The first question we ask is whether there exists a school choice problem in which a sophisticated student prefers for a sincere student to become sophisticated even in the absence of sincere-unconstrained students. We give an affirmative answer, whose construction is the most involved in this paper so far.

\begin{proposition}\label{sophisticated-no-sincere-unconstrained}
There exists a two-district school choice problem in which (1) all sincere students are constrained, and (2) there exists a sophisticated student who strictly prefers for a specific sincere student to become sophisticated.
\end{proposition}

\begin{proof}
Let there be two districts with schools $\ell_1,\ell_2,\ell_3 \in L$ and schools $r_1, r_2, r_3, r_4 \in R$, where each school has unit capacity. District $L$ uses BM, while district $R$ uses DA. Furthermore, let there be six students $i_1$, $i_2$, $i_3$, $i_4$, $i_5$, and~$i_6$.

The students' sophistication types and preference orderings, as well as the schools' priority orderings, are shown in the table below. Students whose preference orderings contain schools in only one district reside in that district and are constrained to it. Other students (those whose preference orderings contains schools in both districts) are unconstrained and reside in an arbitrary district.

\noindent\begin{minipage}[t]{.5\linewidth}
\vspace{0.5pt}
\hfil Student preferences \hfil
\hrule
\begin{align*}
    \text{(sincere) }i_1: & \ell_1 \succ \ell_2 \\
    \text{(sophisticated) }i_2: & \ell_2 \succ r_3 \\
    \text{(sophisticated) }i_3: & r_1 \succ r_2 \\
    \text{(sophisticated) }i_4: & r_2 \succ r_1 \\
    \text{(sophisticated) }i_5: & r_1 \succ r_3 \succ \ell_3 \\
    \text{(sincere) }i_6: & \ell_1
\end{align*}
\end{minipage}%
\begin{minipage}[t]{.5\linewidth}
\vspace{0.5pt}
\hfil School priority orderings \hfil
\hrule
\begin{align*}
    \ell_1: & i_6 - i_1 \\
    \ell_2: & i_1 - i_2 \\
    \ell_3: & i_5 \\
    r_1: & i_4 - i_5 - i_3\\
    r_2: & i_3 - i_4 \\
    r_3: & i_2 - i_5
\end{align*}
\vspace{0.5pt}
\end{minipage}

We will show that $i_3$ and $i_4$ prefer for $i_1$ to become sophisticated. First, consider the original setting where $i_1$ is sincere (and constrained). Since $i_6$ both ranks $\ell_1$ highest and has the highest priority at that school among all students who find it acceptable, $i_6$ is assigned to $\ell_1$. Observe that the sophisticated-unconstrained student $i_2$ enrolls in district $L$, since she is thus assigned to $\ell_2$ as the only student who applies to this school in the first round. Therefore, $i_1$ remains unassigned. Sophisticated-unconstrained student $i_5$ enrolls in district $R$ because this gets her assigned to school $r_3$, which she strictly prefers to any district $L$ school. The matching process results in $i_1$ unassigned, $i_2$ assigned to $\ell_2$, $i_3$ assigned to $r_2$, $i_4$ assigned to $r_1$, $i_5$ assigned to $r_3$, and $i_6$ assigned to $\ell_1$.

Suppose instead that $i_1$ becomes sophisticated. By the same argument, $i_6$ is assigned to $\ell_1$. Diverging from the previous analysis, $i_1$ applies to school $\ell_2$ in the first round and gets assigned to it. $i_2$ therefore enrolls in district $R$ rather than $L$, and since she has the highest priority at $r_3$ among all students who find it acceptable, $i_2$ is assigned to $r_3$. Therefore, since enrolling in district $R$ would now leave student $i_5$ unassigned, she enrolls in district $L$ and is assigned to $\ell_3$. The matching process results in $i_1$ assigned to $\ell_2$, $i_2$ assigned to $r_3$, $i_3$ assigned to $r_1$, $i_4$ assigned to $r_2$, $i_5$ assigned to $\ell_3$, and $i_6$ assigned to $\ell_1$. Therefore, both $i_3$ and $i_4$ strictly prefer for $i_1$ to become sophisticated.
\end{proof}

\subsubsection{Shortcomings of Large-Market Analysis without Sincere-Unconstrained Students}

We observe that replacing the gadget in our current proof of \cref{thm:sophisticated} by the school choice problem from \cref{sophisticated-no-sincere-unconstrained} results in a weaker guarantee. Specifically, the resulting guarantee is of an expected number of sophisticated students who prefer for some sincere student to become sophisticated that is only constant, rather than a constant fraction of the total number of students.\footnote{This is due to the existence of a cycle in the preferences of $i_4$ and $i_5$, similar to the one in the preferences of $i_1$ and~$i_2$ in the proof of \cref{soph_can_prefer_da-theorem}. We thus get a quantitative guarantee similar to that of \cref{soph_can_prefer_da-theorem} rather than to that of \cref{thm:sophisticated}.}

Consider a school choice problem in which some sophisticated student strictly prefers for a sincere student to become sophisticated. We call this school choice problem \emph{weak} if using it to replace the gadget in the proof of \cref{thm:sophisticated} results in a constant-number or weaker guarantee. We call this school choice problem  \emph{strong} if using it to replace the gadget in the proof of that theorem results in the constant-fraction guarantee as in the statement of that theorem. In particular, the school choice problem from \cref{sophistication-example} is strong and the school choice problem from the proof of \cref{sophisticated-no-sincere-unconstrained} is weak.

We now argue that being weak is a limitation not only of the school choice problem from the proof of \cref{sophisticated-no-sincere-unconstrained}, but rather of every school choice problem that lacks sincere-unconstrained students. We start by investigating which aspects of the machinery of \citet{pathak2008leveling} transfers to our setting, in what way, and with what limitations.

Fix some multi-district school choice problem (possibly containing sincere-unconstrained students). Without loss of generality, assume that each constrained student only has in its preference list schools that are in its district. Generalizing a construction by \citet{pathak2008leveling}, define the following \emph{single}-district school choice problem, to which we refer as the ``augmented economy.'' The schools and students in the augmented economy are identical to those in the original, multi-district school choice problem, but the priorities of each school originally located in a district that uses BM is modified as follows to partition students into tiers, where within each tier the priorities are the same as in the original school choice problem, but each student in the first tier has priority over each student in the second tier, and so on. For a school $s$, the first tier consists of all sophisticated students, as well as all sincere students who prefer $s$ over all other schools in the same district as $s$ in the original multi-district school choice problem.\footnote{Recall that students who are originally constrained to districts other than the school's do not have this school in the preference list, so while such sophisticated students are included in this tier, this is inconsequential.} For every $j\ge2$, the $j$th tier consists of all sincere students who strictly prefer precisely $j-1$ schools from the original district of $s$ over $s$.

The proof of Proposition~4 of \citet{pathak2008leveling} (which precludes the possibility of a result like our \cref{thm:sophisticated,sophisticated-no-sincere-unconstrained} in their model) hinges on two properties of the augmented economy in that paper: First, the set of equilibrium outcomes of the original school choice problem coincides with the set of stable matchings of the augmented economy (this is their Proposition 1); and second, every school that participates in BM gives priority in the augmented economy to all sophisticated students, as well as all sincere students who prefer the school most, over all other students (this is a condition of their Lemma~1). The latter property is violated only by sincere-unconstrained students in our augmented economy for multi-district school choice problems.\footnote{Indeed, this property is precluded by the trait discussed in the beginning of \cref{sincere-unconstrained-example} as driving all of the constructions from \cref{sec:sophistication_types}: that a sincere-unconstrained student can block a sophisticated student at some school $s$ (and hence have top-tier priority at $s$), and yet, this sincere-unconstrained student may vacate that seat for another school if she becomes sophisticated (and hence, when sincere, does not have top-tier priority at a school she prefers over $s$).} When all sincere students are constrained, this property is satisfied; however, the first property might be violated, as we now discuss.\looseness=-1

An equilibrium in our original multi-district school choice problem need not be stable in the augmented economy: Such an equilibrium might contain blocking pairs with respect to the augmented economy,\footnote{A blocking pair is a student--school pair where the student prefers the school to her current assignment, and the student has priority (in our case, in the augmented economy) at the school over another student currently assigned to the school.} but only with a very specific structure, to which we refer as \emph{incentive-compatible (IC) blocking pairs}. An IC blocking pair in an equilibrium of our original multi-district school choice problem is a student--school pair that is a blocking pair with respect to the augmented economy; and involves a school $s$ in a district that uses DA, and a (sophisticated) unconstrained student $i$ who enrolls (in the equilibrium) in a different district from $d(s)$, such that if $i$ were to deviate from the equilibrium and enroll in $d(s)$, then in the run of DA in $d(s)$, after $i$ applies to $s$ (knocking out the equilibrium match of $s$), a rejection cycle is initiated that eventually causes $i$ to be rejected from $s$. An example of an IC blocking pair can be seen in the school choice problem in the proof of \cref{sophisticated-no-sincere-unconstrained} when $i_1$ is sophisticated: $i_5$ (who in equilibrium enrolls in district $L$) blocks (in the augmented economy) with school $r_1$ (which is in district $R$, which uses DA). However, if $i_5$ were to enroll in district $R$, then in the run of DA in that district, after $i_5$ knocks out $i_3$ from $r_1$, a rejection cycle is initiated that eventually causes the rejection of $i_5$ from $r_1$ (in favor of $i_4$).

In the absence of both sincere-unconstrained students and IC blocking pairs, the proof of \citet{pathak2008leveling} in fact goes through in our setting as well. In a precise sense, therefore, sincere unconstrained students and IC blocking pairs are the \emph{only} two phenomena that might preclude the result of \citet{pathak2008leveling} in our setting; in \cref{sec:sophistication_types} we demonstrated how the former might do so, and in \cref{sophisticated-no-sincere-unconstrained} we demonstrated this for the latter. Therefore, any school choice problem that proves \cref{sophisticated-no-sincere-unconstrained} despite not containing sincere-unconstrained students must satisfy that its equilibrium matching exhibits an IC blocking pair.\footnote{Furthermore, such a school choice problem must have, in addition to a district that uses DA as mandated by the definition of an IC blocking pair, a district that uses BM for the sincere-turned-sophisticated student to reside in, because if this student resides in a district that uses DA, her behavior is unchanged when becoming sophisticated.} In particular, the preferences of the students must allow for a rejection cycle, which in turn means that there must exist a cycle of students such that for every two consecutive students in the cycle, there must exist a school that both students have in their preference orderings. We now argue that such a cycle of students necessarily implies the school choice problem is weak.

Fix some multi-district school choice problem that satisfies \cref{sophisticated-no-sincere-unconstrained}. Consider a hypergraph $H=(V,E)$ where $V$ are the students in the school choice problem and the hyperedges are in one-to-one correspondence to schools, where the hyperedge corresponding to a specific school contains all of the vertices (students) that have this school in their preference lists. Note that students in distinct connected components of $H$ cannot affect each other, so it's enough to consider each connected component that satisfies \cref{sophisticated-no-sincere-unconstrained} separately and prove that its induced sub-problem is weak (connected components without sophisticated students who strictly prefer for some sincere students to become sophisticated can be ignored). We therefore assume henceforth that $H$ is connected.

In the probabilistic analysis in the proof of \cref{thm:sophisticated} (as well as in the proof of our other probabilistic results), the order of magnitude of our bound on the expected number of occurrences of the gadget is completely determined by the corresponding hypergraph~$H$: Starting with a constant, for every vertex $v\in V$ our bound on the expected number of occurrences gains a factor of $n$, and for every hyperedge $e\in E$ this bound loses a factor of $n^{|e|-1}$. Therefore, if $|V|\le\sum_{e\in E}\bigl(|e|-1\bigr)$, then an expected number of occurrences that is a constant fraction of $n$ cannot be guaranteed using this proof, implying that the school choice problem that we fixed is weak. Note that the cycle of students that is guaranteed above to exist is in fact a cycle in $H$. As such, $H$ satisfies $\sum_{e\in E}\bigl(|e|-1\bigr)\ge|V|$ (recall that for every connected hypergraph, $\sum_{e\in E}\bigl(|e|-1\bigr)\ge|V|-1$, with equality only for hypertrees, i.e., for acyclic connected hypergraphs). Therefore, the school choice problem that we fixed is weak.

Overall, we have shown that no school choice problem in which all sincere students are constrained can be used as the gadget in our proof of \cref{thm:sophisticated}. While this technically does not rule out other proof techniques, it is a strong indication of a key role that sincere-unconstrained students play in the phenomenon formulated by \cref{thm:sophisticated}. Given this result, in the next section we ask how robust \cref{thm:sophisticated} is to the precise definition of sincere-unconstrained students.

\subsection{Robustness to the Definition of Sincere-Unconstrained Students}\label{sec:heuristics}

The definition of sincere-unconstrained students is not beyond criticism. Indeed, one might wonder why a student who possesses sufficient understanding of the intricacies of the mechanisms used by the different districts (and the preferences of other students) so as to be able to perfectly optimize between them does not realize the potential benefits of reporting strategically within the district in which she chooses to enroll. Having ascertained the key role of sincere-unconstrained students in the proof of \cref{thm:sophisticated}, it is therefore natural to ask how robust \cref{thm:sophisticated} is to weakening the ability of such students to fully optimize between districts. Specifically, sincere-unconstrained students might be more realistically modeled as students who are able (e.g., financially) to enroll in a district of their choice, but do not fully grasp the details of the mechanism used in each district and the preferences of other students (and hence report truthfully within the district that they end up choosing). Such students, therefore, are unable to fully optimize their choice of district and have to resort to using some heuristic to choose between districts. In this section, we consider sincere-unconstrained students who choose between districts based on a natural broad class of heuristics that we term \emph{positional heuristics}. We show that regardless of the specific heuristics within this broad class that are used, the main results of the previous sections still hold.

There are many reasonable heuristics that sincere-unconstrained students might employ to choose between districts, and different heuristics may lead to different choices of districts to enroll in. For example, a sincere-unconstrained student could always enroll in the district of her first-choice school (and then report her true preferences over schools in that district). This can, of course, be suboptimal: Such a sincere-unconstrained student who can never get assigned to her first-choice school will still enroll in the district of that school; in contrast, a sincere-unconstrained student who fully strategizes over her district choice (like those considered in the previous section) would instead enroll in another district if she would get a preferable school assignment by reporting her true preferences over schools in that district. 

Beyond fully strategizing over district choice and always enrolling in the district of her first-choice school, there are many other reasonable ways a sincere-unconstrained student can choose where to enroll. For example, a sincere-unconstrained student might simply count the number of schools on her preference ordering from each district, and enroll in the district with the most such schools. Interpolating between this and considering only her first-choice school, she could enroll in the district that has the most schools in, e.g., the top ten schools of her preference ordering. Alternatively, she might score each school based on how high up it is in her preference ordering, e.g., by giving her least preferred (but still acceptable) school one point, her next least preferred school two points, and so on, and then enrolling in the district with the highest sum of points over all schools.

We believe that all of these definitions have merit in different contexts, and therefore avoid choosing between them. Instead, we show a version of our results that holds for a large class of heuristics for sincere-unconstrained students that include all of the above and many more. Taking inspiration from social choice theory \citep[e.g.,][]{young1975social, boutilier2012optimal}, we define the class of scoring functions as follows.

\begin{definition}\leavevmode
\begin{itemize}
\item
    For $v \in \mathbb{R}^n$, define the \emph{scoring function} $f^v$ as follows. For any preference ordering $P_i$, define $P_i(s)$ as the ranking of school $s$ in $P_i$. Further let $v\bigl(P_i(s)\bigr) = v_{P_i(s)}$ if $s\in P_i$ and $v\bigl(P_i(s)\bigr)=0$ otherwise. Then we define the scoring function as
    \[
        f^v(P_i) = \argmax_d \smashoperator[r]{\sum_{s:d(s) = d}} v\bigl(P_i(s)\bigr).
    \]
\item
A scoring function is \emph{monotone} if both $v\bigl(P_i(s)\bigr) \geq v\bigl(P_i(s')\bigr)$ whenever $s$ is ranked higher than~$s'$ in~$P_i$ and $v\bigl(P_i(s)\bigr)\ge0$ whenever $s\in P_i$. 
\item 
A sincere-unconstrained student uses a \emph{positional heuristic} if she chooses in which district to enroll based on a monotone scoring function. In the case of a tie between districts, we assume that a student using a positional heuristic enrolls in the district of the school that she ranks first among all schools in the tied districts.
\end{itemize}
\end{definition}

We emphasize that all of the heuristics discussed above are positional heuristics. For example, one positional heuristic that corresponds to a sincere-unconstrained student choosing to enroll in the district of her first-choice school is the heuristic that assigns a score of $1$ to the first-choice school and a score of $0$ to every other school. If a sincere-unconstrained student instead enrolls in the district with the most schools they find acceptable, this is equivalent to using a positional heuristic that assigns a score of $1$ to every school in her preference ordering (and $0$ to every other school). If a student scores each school by how far it is from the end of her preference ordering, this is equivalent to using the positional heuristic that assigns a score of $[\text{length of ROL}+1 - \ell]$ to the school that is ranked $\ell$ on her list and is analogous to the well-known \emph{Borda Count} voting rule \citep{black1958theory}. 

As we show in this section, our main results also hold when each sincere-unconstrained student uses some positional heuristic to choose between districts (with different sincere-unconstrained students possibly using different heuristics or even fully optimizing).
We note that our results in Sections \ref{sec:mechanism_choice} and \ref{sec:constraint_types} do not require sincere-unconstrained students at all. Therefore, the results in those sections hold regardless of how a sincere-unconstrained student chooses between districts (even if a sincere-unconstrained student does not use a monotone scoring function). We therefore need only justify how our example and theorem from \cref{sec:sophistication_types} continue to hold when sincere-unconstrained students use positional heuristics. 

We first adapt the example from \cref{sophistication-example}. Recall that there is only one sincere-unconstrained student in this example, whose preferences are $r_2 \succ r_1 \succ \ell_1$, and who, when fully optimizing between districts but reporting truthfully within a district, enrolls in district $L$ (and causes $i_2$ to not get assigned to school $\ell_1$). Our key insight is that by adding two schools in district $L$ to the top of the preference list of this student, we can guarantee that she will always choose to enroll in district $L$, regardless of the monotone scoring function used (and consistent with full optimization). This is because by monotonicity, the score of each school in district $R$ is at most the score of the school ranked two above it in district $L$, so the overall score for district $R$ must be at most the score for district $L$. In order to ensure that $i_1$ is not assigned to these new district-$L$ schools when her type is sincere-unconstrained, we also need to add two students who reside in district $L$, each having one of the new schools as her top choice (and each having top priority at that respective school). The outcomes of this modified example match the outcome in Example \ref{sophistication-example}, except that the two new students are also assigned to the two new schools. To ensure that $i_1$, when enrolling in district $L$, still causes $i_2$ to not get assigned to school $\ell_1$, we have district $L$ use DA. The full example is given in \cref{app:heuristics}.

Finally, we adapt our large-market analysis to show that a version of Theorem \ref{thm:sophisticated} still holds when sincere-unconstrained students use positional heuristics, which we prove in \cref{app:heuristics}.

\begin{theorem}\label{thm:positional_heuristics}
    Suppose that every sincere-unconstrained student uses a positional heuristic to decide which district to enroll in. Then for every $p \in (0, 1)$, there exists $\tau > 0$ such that for any large enough $n$, in the uniform $(n; 5)$ model with category-probability lower bound $p$ and with one district using BM and the other using DA, there exists a set of sophisticated students of expected size at least $\tau n$ where each sophisticated student strictly prefers for a distinct sincere student to become sophisticated, and weakly prefers for all other sincere students to become sophisticated.
\end{theorem}
\section{Discussion}\label{sec:discussion}

In this paper, we show that several key results regarding sincere and sophisticated students from the seminal paper of \citet{pathak2008leveling} no longer hold in a multi-district setting. This highlights that when weighing the specifics of the market in question when designing centralized mechanisms, it might be useful to define ``the market in question'' more broadly than is customary.

Several aspects of our model and results would potentially benefit from further research. First, although the idea of district choice is motivated by students and their families boundary hopping or paying a premium to move, we do not explicitly model the associated (financial or risk-taking) costs. Future research could impose a price on choosing a district other than one’s own district of residence, which would then factor into students' strategic considerations. For this to be effective, it would also be necessary to think about student satisfaction with different school assignments using cardinal utilities rather than ordinal preferences; modeling such costs is therefore outside the scope of this paper.\footnote{A recent paper by \cite{ArtemovT2025} studies a single district in which students  can have ``walking distance'' priority for a school based on where they reside. Future research into multi-district school choice might draw inspiration from the way the model of \citeauthor{ArtemovT2025} incorporates a moving cost.}

While Theorem~\ref{soph_can_prefer_da-theorem} establishes an at least constant frequency of sophisticated students who strictly prefer DA over BM even in a large random market, it is our only theorem that does not prove the abundance of such students (i.e., an expected constant fraction of all sophisticated students in a large random market). Even if we extend the proof of Theorem~\ref{soph_can_prefer_da-theorem} to also take into account all cycles of the form discussed in Appendix~\ref{cycle-with-many-sophisticated}, the same proof technique would yield only constant frequency. To achieve a linear, or even super-constant frequency, one would have to, for example, find a way for the existence of some cycle to be sufficient for a super-constant number of sophisticated students outside the cycle to strictly prefer BM over DA. While we conjecture that this is not possible, ruling this out seems to be related to the question of whether Deferred Acceptance ``circuits'' can efficiently encode computational circuits in which various wires split, a question that was resolved negatively by \citet{cook2014complexity}. It is plausible that computational-complexity-theoretic tools such as those used in that paper might be leveraged to prove the asymptotic tightness of the bound in Theorem~\ref{soph_can_prefer_da-theorem}. We conjecture this bound to be tight (perhaps up to logarithmic factors if preference list lengths are not held constant), but leave the verification of this conjecture for future work.

\bibliographystyle{abbrvnat}
\bibliography{bib}

\begin{thebibliography}{37}
\providecommand{\natexlab}[1]{#1}
\providecommand{\url}[1]{\texttt{#1}}
\expandafter\ifx\csname urlstyle\endcsname\relax
  \providecommand{\doi}[1]{doi: #1}\else
  \providecommand{\doi}{doi: \begingroup \urlstyle{rm}\Url}\fi

\bibitem[Abdulkadiro{\u{g}}lu and S{\"o}nmez(2003)]{abdulkadirouglu2003school}
A.~Abdulkadiro{\u{g}}lu and T.~S{\"o}nmez.
\newblock School choice: A mechanism design approach.
\newblock \emph{American Economic Review}, 93\penalty0 (3):\penalty0 729--747, 2003.

\bibitem[Abdulkadiro{\u{g}}lu et~al.(2005{\natexlab{a}})Abdulkadiro{\u{g}}lu, Pathak, and Roth]{abdulkadirouglu2005new}
A.~Abdulkadiro{\u{g}}lu, P.~A. Pathak, and A.~E. Roth.
\newblock The {N}ew {Y}ork {C}ity high school match.
\newblock \emph{American Economic Review}, 95\penalty0 (2):\penalty0 364--367, 2005{\natexlab{a}}.

\bibitem[Abdulkadiro{\u{g}}lu et~al.(2005{\natexlab{b}})Abdulkadiro{\u{g}}lu, Pathak, Roth, and S{\"o}nmez]{abdulkadirouglu2005boston}
A.~Abdulkadiro{\u{g}}lu, P.~A. Pathak, A.~E. Roth, and T.~S{\"o}nmez.
\newblock The {B}oston public school match.
\newblock \emph{American Economic Review}, 95\penalty0 (2):\penalty0 368--371, 2005{\natexlab{b}}.

\bibitem[Abdulkadiro{\u{g}}lu et~al.(2011)Abdulkadiro{\u{g}}lu, Che, and Yasuda]{abdulkadirouglu2011resolving}
A.~Abdulkadiro{\u{g}}lu, Y.-K. Che, and Y.~Yasuda.
\newblock Resolving conflicting preferences in school choice: The ``{B}oston mechanism'' reconsidered.
\newblock \emph{American Economic Review}, 101\penalty0 (1):\penalty0 399--410, 2011.

\bibitem[Afacan et~al.(2022)Afacan, Evdokimov, Hakimov, and Turhan]{afacan2022parallel}
M.~O. Afacan, P.~Evdokimov, R.~Hakimov, and B.~Turhan.
\newblock Parallel markets in school choice.
\newblock \emph{Games and Economic Behavior}, 133:\penalty0 181--201, 2022.

\bibitem[Akbarpour et~al.(2022)Akbarpour, Kapor, Neilson, Van~Dijk, and Zimmerman]{akbarpour2022centralized}
M.~Akbarpour, A.~Kapor, C.~Neilson, W.~Van~Dijk, and S.~Zimmerman.
\newblock Centralized school choice with unequal outside options.
\newblock \emph{Journal of Public Economics}, 210:\penalty0 104644, 2022.

\bibitem[Artemov and Tomoeda(2025)]{ArtemovT2025}
G.~Artemov and K.~Tomoeda.
\newblock Zoned out: The long-term consequences of school choice for wealth segregation.
\newblock Mimeo, 2025.

\bibitem[Babaioff et~al.(2019)Babaioff, Gonczarowski, and Romm]{babaioff2019playing}
M.~Babaioff, Y.~A. Gonczarowski, and A.~Romm.
\newblock Playing on a level field: Sincere and sophisticated players in the {B}oston mechanism with a coarse priority structure.
\newblock In \emph{Proceedings of the 20th ACM Conference on Economics and Computation (EC)}, page 345, 2019.

\bibitem[Bayer et~al.(2007)Bayer, Ferreira, and McMillan]{bayer2007unified}
P.~Bayer, F.~Ferreira, and R.~McMillan.
\newblock A unified framework for measuring preferences for schools and neighborhoods.
\newblock \emph{Journal of Political Economy}, 115\penalty0 (4):\penalty0 588--638, 2007.

\bibitem[Black(1958)]{black1958theory}
D.~Black.
\newblock \emph{The Theory of Committees and Elections}.
\newblock Cambridge University Press, Cambridge, 1958.

\bibitem[Black(1999)]{black1999better}
S.~E. Black.
\newblock Do better schools matter? parental valuation of elementary education.
\newblock \emph{The Quarterly Journal of Economics}, 114\penalty0 (2):\penalty0 577--599, 1999.

\bibitem[Boutilier et~al.(2012)Boutilier, Caragiannis, Haber, Lu, Procaccia, and Sheffet]{boutilier2012optimal}
C.~Boutilier, I.~Caragiannis, S.~Haber, T.~Lu, A.~D. Procaccia, and O.~Sheffet.
\newblock Optimal social choice functions: A utilitarian view.
\newblock In \emph{Proceedings of the 13th ACM Conference on Electronic Commerce (EC)}, pages 197--214, 2012.

\bibitem[Calsamiglia and G{\"u}ell(2018)]{calsamiglia2018priorities}
C.~Calsamiglia and M.~G{\"u}ell.
\newblock Priorities in school choice: The case of the {B}oston mechanism in {B}arcelona.
\newblock \emph{Journal of Public Economics}, 163:\penalty0 20--36, 2018.

\bibitem[Chen and S{\"o}nmez(2006)]{chen2006school}
Y.~Chen and T.~S{\"o}nmez.
\newblock School choice: An experimental study.
\newblock \emph{Journal of Economic Theory}, 127\penalty0 (1):\penalty0 202--231, 2006.

\bibitem[Cook et~al.(2014)Cook, Filmus, and Le]{cook2014complexity}
S.~A. Cook, Y.~Filmus, and D.~T.~M. Le.
\newblock The complexity of the comparator circuit value problem.
\newblock \emph{ACM Transactions on Computation Theory (TOCT)}, 6\penalty0 (4):\penalty0 1--44, 2014.

\bibitem[Doğan and Yenmez(2019)]{Dogan2019}
B.~Doğan and M.~B. Yenmez.
\newblock Unified versus divided enrollment in school choice: Improving student welfare in chicago.
\newblock \emph{Games and Economic Behavior}, 118, 2019.

\bibitem[Dubins and Freedman(1981)]{dubins1981machiavelli}
L.~E. Dubins and D.~A. Freedman.
\newblock Machiavelli and the {G}ale-{S}hapley algorithm.
\newblock \emph{The American Mathematical Monthly}, 88\penalty0 (7):\penalty0 485--494, 1981.

\bibitem[Ekmekci and Yenmez(2019)]{ekmekci2019common}
M.~Ekmekci and M.~B. Yenmez.
\newblock Common enrollment in school choice.
\newblock \emph{Theoretical Economics}, 14\penalty0 (4):\penalty0 1237--1270, 2019.

\bibitem[Ergin and S{\"o}nmez(2006)]{ergin2006games}
H.~Ergin and T.~S{\"o}nmez.
\newblock Games of school choice under the {B}oston mechanism.
\newblock \emph{Journal of Public Economics}, 90\penalty0 (1-2):\penalty0 215--237, 2006.

\bibitem[Fried(2005)]{fried2005attending}
R.~Fried.
\newblock Attending to the bottom line: Boosting district revenue and enhancing educational mission through interdistrict enrollment \& attendance policy.
\newblock An advanced policy analysis prepared for the {B}erkeley unified school district, {B}erkeley, {C}alifornia, 2005.
\newblock URL \url{https://newspack-berkeleyside-cityside.s3.amazonaws.com/wp-content/uploads/2014/04/Attending-to-the-Bottomline-Final.pdf}.

\bibitem[Gale and Shapley(1962)]{gale1962college}
D.~Gale and L.~Shapley.
\newblock College admissions and the stability of marriage.
\newblock \emph{American Mathematical Monthly}, 69\penalty0 (1):\penalty0 9--15, 1962.

\bibitem[Grigoryan(2023)]{Grigoryan2023}
A.~Grigoryan.
\newblock School choice and the housing market.
\newblock Mimeo, 2023.

\bibitem[Hafalir et~al.(2022)Hafalir, Kojima, and Yenmez]{Hafalir2022}
I.~E. Hafalir, F.~Kojima, and M.~B. Yenmez.
\newblock Interdistrict school choice: A theory of student assignment.
\newblock \emph{Journal of Economic Theory}, 201, 2022.

\bibitem[Immorlica and Mahdian(2005)]{immorlica2005marriage}
N.~Immorlica and M.~Mahdian.
\newblock Marriage, honesty, and stability.
\newblock In \emph{Proceedings of the Sixteenth Annual ACM-SIAM Symposium on Discrete Algorithms}, SODA '05, page 53–62. Society for Industrial and Applied Mathematics, 2005.

\bibitem[Kang and Muir(2023)]{KangM2023tutorial}
Z.~Kang and E.~Muir.
\newblock Partial mechanism design and incomplete-information industrial organization.
\newblock Tutorial, The 24th ACM Conference on Economics and Computation (EC), 2023.
\newblock URL \url{https://ziyangkang.com/ec'23-tutorial}.

\bibitem[Kang(2023)]{Kang2023}
Z.~Y. Kang.
\newblock The public option and optimal redistribution.
\newblock Mimeo, 2023.

\bibitem[Kojima(2008)]{kojima2008games}
F.~Kojima.
\newblock Games of school choice under the {B}oston mechanism with general priority structures.
\newblock \emph{Social Choice and Welfare}, 31:\penalty0 357--365, 2008.

\bibitem[Manjunath and Turhan(2016)]{manjunath2016two}
V.~Manjunath and B.~Turhan.
\newblock Two school systems, one district: What to do when a unified admissions process is impossible.
\newblock \emph{Games and Economic Behavior}, 95:\penalty0 25--40, 2016.

\bibitem[Martin(2011)]{martin2011mother}
M.~Martin.
\newblock Mother jailed for school fraud, flares controversy.
\newblock Tell Me More, NPR, 2011.
\newblock URL \url{https://www.npr.org/2011/01/28/133306180/Mother-Jailed-For-School-Fraud-Flares-Controversy}.

\bibitem[Pathak and S{\"o}nmez(2008)]{pathak2008leveling}
P.~A. Pathak and T.~S{\"o}nmez.
\newblock Leveling the playing field: Sincere and sophisticated players in the {B}oston mechanism.
\newblock \emph{American Economic Review}, 98\penalty0 (4):\penalty0 1636--1652, 2008.

\bibitem[Philippon and Skreta(2012)]{PhilipponS2012}
T.~Philippon and V.~Skreta.
\newblock Optimal interventions in markets with adverse selection.
\newblock \emph{American Economic Review}, 102\penalty0 (1):\penalty0 1--28, 2012.

\bibitem[Roth(1982)]{roth1982economics}
A.~E. Roth.
\newblock The economics of matching: Stability and incentives.
\newblock \emph{Mathematics of operations research}, 7\penalty0 (4):\penalty0 617--628, 1982.

\bibitem[Tirole(2012)]{Tirole2012}
J.~Tirole.
\newblock Overcoming adverse selection: How public intervention can restore market functioning.
\newblock \emph{American Economic Review}, 102\penalty0 (1):\penalty0 29--59, 2012.

\bibitem[Turhan(2019)]{turhan2019welfare}
B.~Turhan.
\newblock Welfare and incentives in partitioned school choice markets.
\newblock \emph{Games and Economic Behavior}, 113:\penalty0 199--208, 2019.

\bibitem[Yin(2022)]{Yin2022}
M.~Yin.
\newblock Multi-district school choice: When sincere students stay and sophisticated students stray.
\newblock Underaduate thesis, Harvard University, 2022.

\bibitem[Young(1975)]{young1975social}
H.~P. Young.
\newblock Social choice scoring functions.
\newblock \emph{SIAM Journal of Applied Mathematics}, 28\penalty0 (4):\penalty0 824--838, 1975.

\bibitem[Zhang(2021)]{Zhang2021}
J.~Zhang.
\newblock Level-k reasoning in school choice.
\newblock \emph{Games and Economic Behavior}, 128:\penalty0 1--17, 2021.

\end{thebibliography}

\newpage
\appendix
\section{Cycles where Multiple Sophisticated Students Prefer Deferred Acceptance}\label{cycle-with-many-sophisticated}

In this appendix, we generalize the example in Section~\ref{soph_can_prefer_da-example} to show that there may be multiple sophisticated students who all prefer that a district uses DA when their preferences form a cycle including exactly one sincere student. 

Suppose again there are two districts $L$ and $R$ with schools $\ell_1, \ell_2,..., \ell_x \in L$ and school $r_1 \in R$, where all schools have capacity $1$. Further suppose that there are $x + 1$ students $i_1, i_2,...,i_{x+1}$. Student $i_1$ is sincere and constrained, while students $i_2,...,i_{x+1}$ are all sophisticated and unconstrained. 

\noindent\begin{minipage}[t]{.5\linewidth}
\vspace{0.5pt}
\hfil Student preferences \hfil
\hrule
\begin{align*}
    i_1: & \ell_1 \succ \ell_x \\
    i_2: & \ell_2 \succ \ell_1 \\
    & \vdots \\
    i_x: & \ell_x \succ \ell_{x-1} \\
    i_{x+1}: & \ell_x \succ r_1 \\
\end{align*}
\end{minipage}%
\begin{minipage}[t]{.5\linewidth}
\vspace{0.5pt}
\hfil School priority orderings \hfil
\hrule
\begin{align*}
    \ell_1: & i_2 - i_1 \\
    \ell_2: & i_3 - i_2 \\
    & \vdots \\
    \ell_{x-1}: & i_x - i_{x-1} \\
    \ell_x: & i_1 - i_{x+1} - i_x \\
    r_1: & i_{x+1} \\
\end{align*}
\vspace{0.5pt}
\end{minipage}

Note that the preferences of students $i_1,...,i_x$ are cyclical. Student $i_{x + 1}$ most prefers the second most preferred school of sincere student $i_1$, and second most prefers the only school in district $R$. Student $i_{x + 1}$ is also the only student who finds $r_1$ acceptable, which implies that the matching mechanism used by $R$ is irrelevant.

We will show that students $i_2,...,i_x$ all prefer district $L$ to use DA instead of BM. First, assume that district $L$ is using BM. As sincere student $i_1$ will not apply to $\ell_x$ in the first round, sophisticated student $i_{x + 1}$ will choose to apply to his first choice $\ell_x$ in the first round, guaranteeing acceptance at $\ell_x$ for himself. Sophisticated student $i_x$ will realize she has no chance at $\ell_x$ and will instead apply to $\ell_{x - 1}$. This initiates a chain reaction in which all students $i_2,...,i_{x}$ end up applying to their second choice school, as each would not be accepted by their first choice school. This process results in $i_1$ unassigned, $i_2,...,i_x$ each assigned to their second choice schools $\ell_1,...,i_{x-1}$ respectively, and $i_{x + 1}$ assigned to $\ell_x$. 

Now suppose instead that district $L$ uses DA. Then each of the students $i_2,...,i_x$ uses her dominant strategy of ranking truthfully. This induces student $i_{x + 1}$ to enroll in $R$ instead of $L$, as enrolling in $L$ would result in $i_{x+1}$ being unassigned. Note that if $i_{x+1}$ chooses to enroll in $L$, $i_{x+1}$ would end up being knocked out of $\ell_x$ by $i_1$, and would therefore be unassigned as $i_{x+1}$ finds no other schools in $L$ acceptable. Therefore, the unique Nash equilibrium has $i_1,...,i_x$ each assigned to their first choice schools $\ell_1,...,i_x$ respectively, and $i_{x + 1}$ assigned to $r_1$. In this Nash equilibrium, sophisticated students $i_2,...,i_x$ are all assigned to their first choice schools, which is a strict improvement for each compared to when district $L$ uses BM.

\section{Proofs Omitted from Section~\ref{sec:constraint_types}}\label{app:constraint}

We start by demonstrating the first, simpler construction for proof of a case of \cref{constraint}.

\begin{lemma}
    For every $p \in (0, 1)$, there exists $\tau > 0$ such that for any large enough $n$, in the uniform $(n; 2)$ model with category-probability lower bound $p$ and with the districts using any combination of matching mechanisms, there exists a set of unconstrained students of expected size at least $\tau n$ where each unconstrained student strictly prefers for a distinct constrained student to become unconstrained.
\end{lemma}

\begin{proof}
    We start by lower bounding the expected number of ordered pairs of students $(i_1, i_2)$ that satisfy the following conditions:
    \begin{enumerate}
        \item $i_1$ is unconstrained, and $i_2$ is constrained and resides in district $L$.
        \item $i_1$ most prefers a school $\ell \in L$.
        \item $i_2$ most prefers a school $r \in R$, and second most prefers $\ell$.
        \item No student finds a school in $\{\ell, r\}$ acceptable except as above.. 
        \item $i_2$ is higher in the priority order than $i_1$ at $\ell$.
    \end{enumerate}
    Under these conditions, $i_1$ strictly benefits from $i_2$ becoming unconstrained, as $i_2$ then switches from enrolling in district $L$ to enrolling in district $R$, which opens up school~$\ell$ for $i_1$.
    
    We lower bound the probability that this set of conditions occurs for a specific $i_1$ and~$i_2$. Conditioned on all previous conditions, the first condition occurs with probability at least $p^2$, the second condition occurs with probability at least $\tfrac{1}{2}$, and the third condition occurs with probability $\tfrac{n-1}{n} \cdot \tfrac{1}{2} \cdot \tfrac{1}{n-1}$. The fourth condition occurs with probability at least $\left(\frac{n-2}{n} \cdot \frac{n-3}{n-1}\right)^n$. Finally, the fifth condition occurs with probability $\tfrac{1}{2}$. For sufficiently large~$n$, this probability can be lower bounded by
    \[
        \frac{p^2}{2^3} \cdot \frac{1}{n} \cdot \left(\frac{n-2}{n} \cdot \frac{n-3}{n-1}\right)^n \geq \frac{p^2}{2^4e^4(n-1)}.
    \]

    The number of ordered pairs of students is $n(n-1)$. List all such pairs and let the $j$th element of the list be $I_j$. Let $\mathbbm{1}_j$ be the indicator random variable which represents whether the students in $I_j$ satisfy all of the above conditions. Then the expected number of pairs of students that satisfy all conditions is 
    \[
        \mathbb{E}\left[\sum_{j=1}^{n(n-1)} \mathbbm{1}_ j\right] \geq n(n-1) \cdot \frac{p^2}{2^4e^4(n-1)} = \frac{p^2}{2^4e^4} \cdot n.
    \]
    We observe that by construction, it is impossible for there to be two different pairs $(i_1,i_2)$ and $(i'_1,i'_2)$ that satisfy the above conditions such that $i'_1=i_1$ or $i'_2=i_2$. Indeed, this would imply that $(i_1,i_2)=(i'_1,i'_2)$ because the identity of each of $i_1$ and $i_2$ uniquely determines that of the other, while the identity of each of $i'_1$ and $i'_2$ uniquely determines that of the other in the same way.

    To conclude, all such pairs of students involve distinct students, so there are at least an expected $\frac{p^2}{2^4e^4} \cdot n$ unconstrained students who each strictly prefer for a distinct constrained student to become unconstrained. We  choose $\tau = \frac{p^2}{2^4e^4}$ to satisfy the theorem statement.
\end{proof}

We now demonstrate the use of the second, slightly more elaborate, construction for proof of a case of \cref{constraint}.

\begin{lemma}
    For every $p \in (0, 1)$, there exists $\tau > 0$ such that for any large enough $n$, in the uniform $(n; 2)$ model with category-probability lower bound $p$ and with the districts using any combination of matching mechanisms, there exists a set of constrained students of expected size at least $\tau n$ where each constrained student strictly prefers for a distinct constrained student \uline{in another district} to become unconstrained.
\end{lemma}
\begin{proof}
    We start by lower bounding the expected number of ordered trios of students $(i_1, i_2, i_3)$ that satisfy the following conditions:
    \begin{enumerate}
        \item $i_1$ is constrained and resides in district $L$, $i_2$ is constrained and resides in district $R$, and $i_3$ is unconstrained.
        \item $i_1$ most prefers a school $\ell_1 \in L$.
        \item $i_2$ most prefers a school $\ell_2 \in L$, and second most prefers a school $r_1 \in R$.
        \item $i_3$ most prefers $r_1$, and second most prefers $\ell_1$.
        \item No student finds a school in $\{\ell_1, \ell_2, r_1\}$ acceptable except as above.
        \item $i_3$ is higher in the priority order than $i_1$ at $\ell_1$, and $i_2$ is higher in the priority order than $i_3$ at $r_1$.
    \end{enumerate}
    Under these conditions, $i_1$ strictly benefits from $i_2$ becoming unconstrained, as $i_2$ then switches from enrolling in district $R$ to enrolling in district $L$. As a result, $i_3$ switches from enrolling in district $L$ to enrolling in district $R$, which opens up school $\ell_1$ for~$i_1$.
        
    We lower bound the probability that this set of conditions occurs for a specific $(i_1, i_2, i_3)$. Conditioned on all previous conditions, the first condition occurs with probability at least~$p^3$, the second condition occurs with probability at least $\tfrac{1}{2}$, the third condition occurs with probability $(\tfrac{n-1}{n} \cdot \tfrac{1}{2} \cdot \tfrac{n-2}{n-1} \cdot \tfrac{1}{2})$, and the fourth condition occurs with probability $(\tfrac{1}{n} \cdot \tfrac{1}{n-1})$. The fifth condition occurs with probability at least $\left(\frac{n-3}{n} \cdot \frac{n-4}{n-1}\right)^n$. Finally, the sixth condition occurs with probability $\tfrac{1}{2^2}$. For sufficiently large $n$, this probability can be lower bounded by
    \[
        \frac{p^3}{2^5} \cdot \left(\frac{n-1}{n} \cdot \frac{n-2}{n-1}\right) \cdot \left(\frac{1}{n} \cdot \frac{1}{n-1}\right) \cdot \left(\frac{n-3}{n} \cdot \frac{n-4}{n-1}\right)^n \geq \frac{p^3}{2^6e^{6}(n-1)(n-2)}.
    \]

    The number of ordered trios of students is $n(n-1)(n-2)$. List all such trios and let the $j$th element of the list be $I_j$. Let $\mathbbm{1}_j$ be the indicator random variable that represents whether the students in $I_j$ satisfy all of the above conditions. Then the expected number of trios of students that satisfy all conditions is 
    \[
        \mathbb{E}\left[\sum_{j=1}^{n(n-1)(n-2)} \mathbbm{1}_{j}\right] \geq n(n-1)(n-2) \cdot \frac{p^3}{2^6e^{6}(n-1)(n-2)} = \frac{p^3}{2^6e^{6}} \cdot n.
    \]
    We observe that by construction, it is impossible for there to be two different trios $(i_1,i_2,i_3)$ and $(i'_1,i'_2,i'_3)$ that satisfy the above conditions such that $i'_1=i_1$ or $i'_2=i_2$. Indeed, this would imply that $(i_1,i_2,i_3)=(i'_1,i'_2,i'_3)$ because the identity of each of $i_1$ and $i_2$ uniquely determines that of $i_3$, which in turn uniquely determines those of all students in $(i_1,i_2,i_3)$, while the identity of each of $i'_1$ and $i'_2$ uniquely determines that of $i'_3$ and hence those of all students in $(i'_1,i'_2,i'_3)$ in the same way.

    To conclude, all such trios of students involve distinct ``$i_1$'' and ``$i_2$'' students, so there are at least an expected $\frac{p^3}{2^6e^{6}} \cdot n$ constrained students who each strictly prefer for a distinct constrained student in another district to become unconstrained. We  choose $\tau = \frac{p^3}{2^6e^{6}}$ to satisfy the theorem statement.
\end{proof}

\section{Proofs Omitted from Section~\ref{sec:sincere-unconstrained}}\label{app:heuristics}

The full details of the example described in \cref{sec:heuristics} are as follows:

\noindent\begin{minipage}[t]{.5\linewidth}
\vspace{0.5pt}
\hfil Student preferences \hfil
\hrule
\begin{align*}
    \text{(sincere) }i_1: & \ell_2 \succ \ell_3 \succ r_2 \succ r_1 \succ \ell_1 \\
    \text{(sophisticated) }i_2: & \ell_1 \succ r_3 \\
    \text{(sincere) }i_3: & r_2 \\
    \text{(sincere) }i_4: & r_1 \succ r_4 \\
    \text{(sincere) }i_5: & \ell_2 \\
    \text{(sincere) }i_6: & \ell_3 \\
\end{align*}
\end{minipage}%
\begin{minipage}[t]{.5\linewidth}
\vspace{0.5pt}
\hfil School priority orderings \hfil
\hrule
\begin{align*}
    \ell_1: & i_1 - i_2 \\
    \ell_2: & i_5 - i_1 \\
    \ell_3: & i_6 - i_1 \\
    r_1: & i_1 - i_4\\
    r_2: & i_3 - i_1 \\
    r_3: & i_2 \\
    r_4: & i_4
\end{align*}
\vspace{0.5pt}
\end{minipage}

\begin{proof}[Proof of \cref{thm:positional_heuristics}]
    As in the proof of Theorem \ref{thm:sophisticated}, we start by lower bounding the expected number of ordered sextets of students that satisfy the following conditions. 
    \begin{enumerate}
        \item \textbf{Conditions on sophistication types, constraint types, and residence:} 
        \begin{enumerate}
            \item $i_1$ is sincere and unconstrained.
            \item $i_2$ is sophisticated and unconstrained.
            \item Both $i_3$ and $i_4$ are sincere and reside in $R$.
            \item Both $i_5$ and $i_6$ are sincere and reside in $L$.
        \end{enumerate}
        \item \textbf{Conditions on student preferences and school locations:}
        \begin{enumerate}
            \item $i_6$ most prefers a school $\ell_3 \in L$.
            \item $i_5$ most prefers a school $\ell_2 \in L$.
            \item $i_4$ most prefers a school $r_1 \in R$ and second most prefers a school $r_4\in R$.
            \item $i_3$ most prefers a school $r_2 \in R$.
            \item $i_2$ most prefers a school $\ell_1 \in L$ and second most prefers a school $r_3 \in R$.
            \item $i_1$ has preference ordering $\ell_2 \succ \ell_3 \succ r_2 \succ r_1 \succ \ell_1$.
            \item No student finds a school in $\{\ell_1,\ell_2,\ell_3,r_1,r_2,r_3,r_4\}$ acceptable except as above. 
        \end{enumerate}
        \item \textbf{Conditions on school priorities:} 
        \begin{enumerate}
            \item $i_1$ has priority over $i_2$ at $\ell_1$.
            \item $i_5$ has priority over $i_1$ at $\ell_2$.
            \item $i_6$ has priority over $i_1$ at $\ell_3$.
            \item $i_1$ has priority over $i_4$ at $r_1$.
            \item $i_3$ has priority over $i_1$ at $r_2$.
        \end{enumerate}
    \end{enumerate}
    Our analysis changes as follows. We added two new students, who are sincere and reside in $L$, and so the conditions on sophistication types, constraint types, and residence are satisfied with probability at least $p^6$. We also added two new schools, which must each prefer a specific student over $i_1$, so the conditions on school priorities are satisfied with probability $\tfrac{1}{2^5}$. The conditions on student preferences and school locations are satisfied with probability at least
    \begin{multline*}
        \left(\frac{1}{2}\right) \cdot \left(\frac{n-1}{n} \cdot \frac{1}{2}\right) \cdot \left(\frac{n-2}{n} \cdot \frac{1}{2} \cdot \frac{n-3}{n-1} \cdot \frac{1}{2}\right) 
        \cdot \left(\frac{n-4}{n}\cdot\frac{1}{2}\right)\cdot \left(\frac{n-5}{n} \cdot \frac{1}{2} \cdot \frac{n-6}{n-1}\cdot \frac{1}{2} \right) \cdot \\* \cdot \left(\frac{1}{n(n-1)(n-2)(n-3)(n-4)}\right) \cdot \left(\frac{n-7}{n} \cdot \frac{n-8}{n-1} \cdot \frac{n-9}{n-2} \cdot \frac{n-10}{n-3} \cdot \frac{n-11}{n-4} \right)^{n}.
    \end{multline*}
    For sufficiently large $n$, this probability can be lower bounded by    
    \[
        \frac{1}{2^7} \cdot \frac{(n - 5)!}{n!} \cdot \frac{1}{e^{35}} \cdot \frac{1}{2}.
    \]
    The number of ordered sextets of distinct students is $\tfrac{n!}{(n - 6)!}$. The expected number of sextets of students that satisfy all conditions is therefore at least
    \[
        \frac{n!}{(n - 6)!} \cdot p^6 \cdot \frac{1}{2^5} \cdot \frac{1}{2^{8} \cdot e^{35}} \cdot \frac{(n - 5)!}{n!} = \frac{p^{6}(n - 5)}{2^{13} \cdot e^{35}}.
    \]
    The rest of the analysis follows as in the proof of Theorem \ref{thm:sophisticated}. We choose $\tau = \frac{p^{6}}{2^{14} \cdot e^{35}}$ to satisfy the theorem statement.
\end{proof}

\end{document}